\documentclass{rQUF2e}

\usepackage{epstopdf}
\usepackage{subfigure}
\usepackage{threeparttable} 
\usepackage{booktabs} 
\usepackage{comment} 
\usepackage{color} 
\usepackage{rotating} 
\usepackage{lscape} 

\newcommand{\ep}{\varepsilon}

\usepackage{tikz}
\usetikzlibrary{calc}
\usetikzlibrary{shapes.geometric}
\usetikzlibrary{arrows.meta}

\newcommand{\bx}{\bm{x}}
\newcommand{\bX}{\bm{X}}
\newcommand{\by}{\bm{y}}
\newcommand{\bY}{\bm{Y}}

\newcommand{\bZ}{\bm{Z}}

\newcommand{\bw}{\bm{w}}
\newcommand{\bW}{\bm{W}}
\newcommand{\bu}{\bm{u}}

\newcommand{\bbeta}{\bm{\beta}}
\newcommand{\bmu}{\bm{\mu}}

\newcommand{\bone}{\bm{1}}
\newcommand{\bzero}{\bm{0}}
\newcommand{\bep}{\bm{\varepsilon}}

\newcommand{\bI}{\bm{I}}

\newcommand{\R}{\mathbb{R}}

\newcommand{\E}{\mathbb{E}}

\newcommand{\Var}{\text{Var}}

\newcommand{\Prob}{\mathbb{P}}

\newcommand{\dd}{\text{d}}

\newcommand{\sqc}[2]{#1_{1},#1_{2},\dots,#1_{#2}}

\newcommand{\AC}[1]{\text{AC}_{#1}}
\newcommand{\estAC}[2]{\widehat{\text{AC}}^{\text{\scriptsize #1}}_{#2}}

\newcommand{\tra}[1]{#1^{\text{\scriptsize T}}}

\newcommand{\iidsim}{\stackrel{i.i.d.}{\sim}}
\newcommand{\darrow}{\stackrel{d}{\longrightarrow}}

\theoremstyle{plain}
\newtheorem{theorem}{Theorem}[section]

\theoremstyle{definition}

\newtheorem{example}{Example}

\newtheorem{assumption}{Assumption}

\theoremstyle{remark}
\newtheorem{remark}{Remark}

\newcommand*{\changed}[1]{}

\begin{document}


\title{Estimation of risk contributions with MCMC}

\author{TAKAAKI KOIKE$^{\ast}$$\dag$\thanks{$^\ast$Corresponding author.
Email: tkoike@uwaterloo.ca} and MIHOKO MINAMI${\ddag}$\\
\affil{$\dag$Department of Statistics and Actuarial Science, University of Waterloo, 200 University Avenue West, Waterloo, ON, N2L 3G1, Canada.\\
$\ddag$Department of Mathematics, Keio University, Hiyoshi, Kohoku-ku, Yokohama, Kanagawa, 3-14-1, Japan.}} 

\maketitle

\begin{abstract}
Determining risk contributions of unit exposures to portfolio-wide economic capital is an important task in financial risk management.
Computing risk contributions involves difficulties caused by rare-event simulations.
In this study, we address the problem of estimating risk contributions when the total risk is measured by value-at-risk (VaR).
Our proposed estimator of VaR contributions is based on the Metropolis-Hasting (MH) algorithm, which is one of the most prevalent Markov chain Monte Carlo (MCMC) methods.
Unlike existing estimators, our MH-based estimator consists of samples from conditional loss distribution given a rare event of interest.
This feature enhances sample efficiency compared with the crude Monte Carlo method.
Moreover, our method has the consistency and asymptotic normality, and is widely applicable to various risk models having joint loss density.
Our numerical experiments based on simulation and real-world data demonstrate that in various risk models, even those having high-dimensional ($\approx 500$) inhomogeneous margins, our MH estimator has smaller bias and mean squared error compared with existing estimators. 
\\ \\
\end{abstract}

\begin{keywords}
Value-at-risk; Risk allocation; Risk contributions; VaR contributions; Copulas; Markov chain Monte Carlo; Metropolis-Hastings algorithm\\
\end{keywords}

\begin{classcode}C58, C63\end{classcode}

\section{Introduction}\label{introduction}
In most financial institutions, the risk of their portfolios is measured by economic capital.
Capital allocation is an important risk analysis, where the economic capital is decomposed into a sum of \emph{risk contributions} of unit exposures; see, for example, \citet{dev2004economic}.
The \emph{Euler principle}, proposed in \citet{tasche1999risk}, is one of the most well-known rules of risk allocation.
It is economically justified, for example, in \citet{denault2001coherent} and \citet{tasche1999risk,tasche2008capital}\changed{, and the resulting allocated capital is also known as the \emph{Aumann-Shapley value} \citep{aumann2015values} for risk capital allocation problems; see, for example, \citet{boonen2012generalization}, \citet{kalkbrener2005axiomatic}, and \citet{myers2001capital}.}

On the other hand, calculating risk contributions poses theoretical and numerical difficulties, especially when the portfolio-wide risk is measured by {\it value-at-risk (VaR)}.
Although a simple formula of VaR contributions is derived by \citet{tasche2001conditional}, it can rarely be calculated analytically without a few exceptions, for example, in \citet{tasche2004capital}.
As is seen in \citet{fan2012decomposition} and \citet{yamai2002comparative}, the \emph{crude Monte Carlo (MC)} method is the simplest method of computing risk contributions.
However, the MC estimator suffers from unignorable bias caused by sample inefficiency and by inevitable numerical modification; see, for instance, \citet{yamai2002comparative}. To overcome such difficulties, several methods have been proposed in the literature.
For instance, \citet{hallerbach2003decomposing} and \citet{tasche2004approximations} derived approximation formulas by regarding VaR contributions as the best predictor of individual losses given total loss.
In this paper, we call this estimator the {\it generalized regression (GR) estimator}.
\citet{glasserman2005measuring} developed {\it importance sampling (IS) estimators} with their main focus on credit portfolios.
\changed{The IS method generates samples from the so-called \emph{instrumental distribution} and then adjusts them so that the estimator is consistent.
This process of adjustment typically increases the variance of the estimator as a trade-off for enhancing effective sample size.
Finding appropriate instrumental distributions to reduce the variance often relies on knowledge of the distribution on rare events of interest. 
}
Finally, \citet{tasche2009capital} proposed the {\it Nadaraya-Watson (NW) estimator}, which is based on the kernel estimation method.
Despite its ease of calculation, it still requires importance sampling to achieve an efficient estimation.

In this paper, we propose a new method of estimating VaR contributions that utilizes the {\it Markov chain Monte Carlo (MCMC)}, especially the \emph{Metropolis-Hastings (MH) algorithm} \citep{metropolis1953equation,hastings1970monte}.
Our MH method requires joint loss density which can be evaluated at each point.
This is often the case when losses are modelled separately by marginal distributions and a copula; see \cite{yoshiba2013risk} for various examples.
\changed{For such loss models, the IS method is not straightforwardly applicable, since in general there is no guidance on the appropriate choice of instrumental distribution}. 
To the best of our knowledge, no stable estimator of VaR contributions is known for general risk models.
We study the consistency and asymptotic normality of our MH estimator, and provide practical guidelines for the efficient application of the MCMC method to the problem of computing VaR contributions.
The proposed method is then carried out for various risk models based on simulations and real-world data.
In numerical experiments, we compare the performance of the MH estimator with other existing estimators.

The foremost difference between our MH method and the crude MC is that in the former, samples are generated directly from the joint loss distribution given a rare event of interest.
In contrast, the MC method generates samples from the unconditional loss distribution, which makes it inevitable to waste a large portion of samples;
\changed{see Figure~\ref{Fig:Difference:MC:MCMC} for the illustration of this difference of MC and MCMC (MH).
The left figure of the MH samples also describes an underlying idea for simulating the conditional distribution of interest, which is to update samples sequentially so that they lie in the area of the rare event.
This feature of the MH method significantly improves sample efficiency without the need for any computational modification as is done in the MC method, which inevitably causes a bias.
As a consequence, a small bias and variance can be expected for the MH estimator compared with existing estimators.}

\begin{figure}[t]
\includegraphics[scale=0.4]{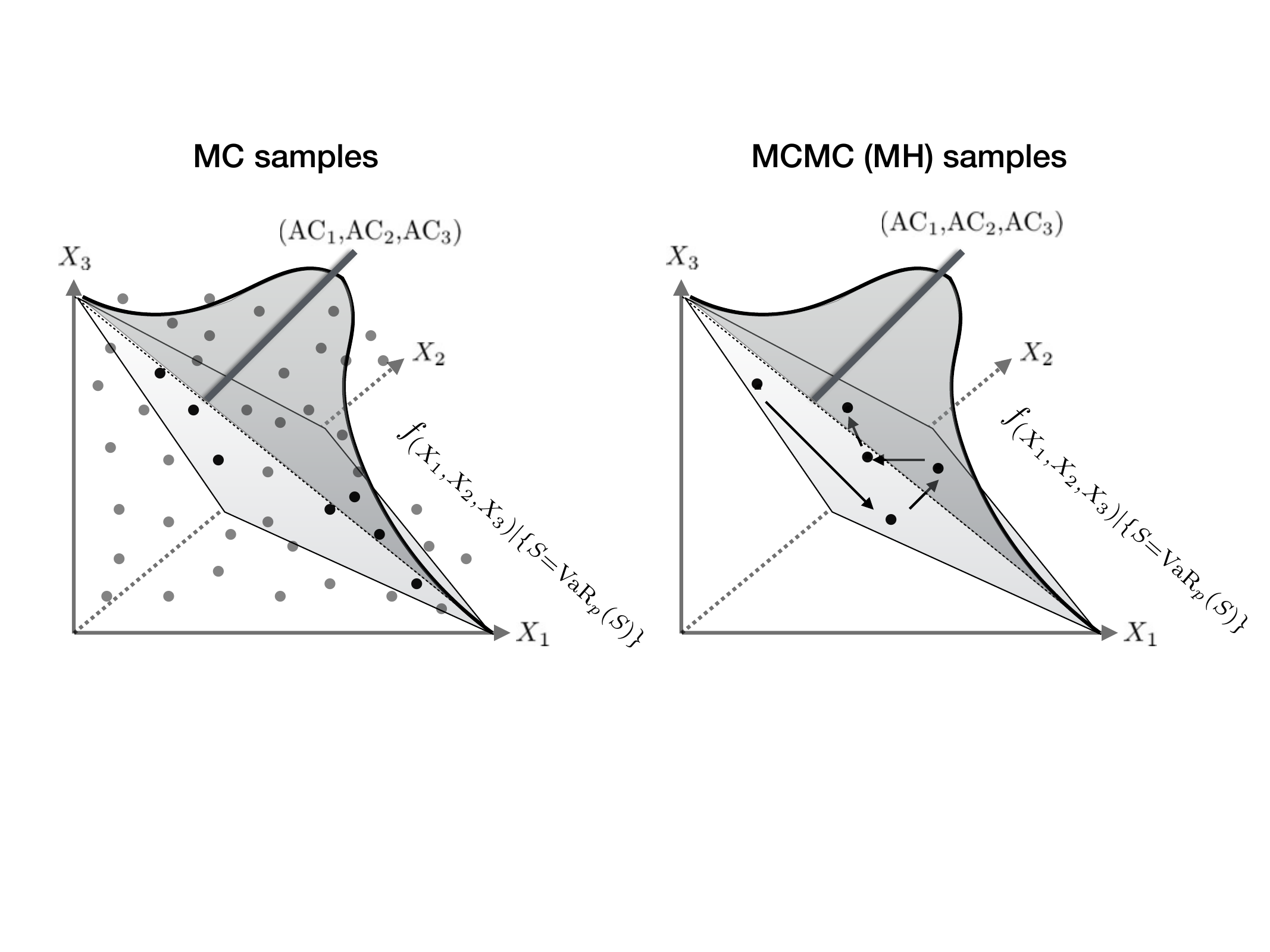}
\centering \vspace{-30mm}
\caption{
	The difference of the Monte Carlo (MC, left) and Markov chain Monte Carlo (MCMC, right) on estimating the VaR contributions (AC$_1$,AC$_2$,AC$_3$)$=\E[(X_1,X_2,X_3)| S=\text{VaR}_p(S)]$ where $X_j,j=1,2,3$ are loss random variables, $S=X_1+X_2+X_3$ is the total loss, and \text{VaR}$_p(S)$ is value-at-risk of $S$ with confidence level $p \in (0,1)$. 
	In the MC method, samples are generated from the unconditional distribution of $(X_1,X_2,X_3)$; a few samples close enough to the plane $\{(x_1,x_2,x_3)|x_1+x_2 + x_3 = \text{VaR}_{p}(S)\}$ are only used to estimate allocated capital.
	On the other hand, the MH method generates samples directly from the joint loss distribution given a rare event of interest, which is denoted as $f_{(X_1,X_2,X_3)|\{S=\text{VaR}_{p}(S)\}}$.
	}
\label{Fig:Difference:MC:MCMC}
\end{figure}

This paper is organized as follows.
Section \ref{capital allocation problem} introduces the mathematical setting of the capital allocation problem and explains challenges on estimating VaR contributions with the existing estimators.
Section \ref{sec:MCMC estimators} provides a brief introduction to the MCMC method and various MH algorithms.
In section \ref{the proposed method}, we propose the MH estimator that combines the MH method with the estimation of VaR contributions.
Next, in section \ref{Simulation and empirical studies}, numerical studies are conducted based on simulation and real-world data.
We demonstrate that for various risk models with marginal- and dependence-inhomogeneity and/or high-dimensionality, the MH estimator has smaller bias and mean squared error (MSE) than those of existing estimators.
For applying our method to other risk models not presented in this paper, practical guidelines on the usage of the MH method are also provided.
Concluding remarks and discussions are given in section \ref{Concluding remarks}.
Based on the theory of MCMC, the consistency and asymptotic normality of our estimator are derived in appendix A

\section{Capital allocation problem}\label{capital allocation problem}

Throughout this paper, the aggregate loss
\begin{equation*}\label{portfolio loss model}
S=\sum_{j=1}^{d}X_{j},
\end{equation*}
is considered, where $d\geq 3 $ is the size of the portfolio, and $\sqc{X}{d}$ are random variables on an atomless probability space $(\Omega,\mathcal F,\Prob)$ that represent the losses incurred by exposures $j = 1,2,\dots,d$ within a fixed time period.
In this study, a positive value of a loss random variable represents a financial loss, and a negative loss is interpreted as a profit. 
Let $F_{\bX}$ be the joint distribution function (df) of $\bX=(\sqc{X}{d})$ with margins $\sqc{F}{d}$, and let $F_{S}$ be the df of the total loss $S$.
Assume that $F_{\bX}$ and $F_S$ have densities $f_{\bX}$ with marginal densities $\sqc{f}{d}$ and $f_{S}$, respectively.
According to Sklar's theorem \citep[see, for example,][]{nelsen2006introduction}, it holds that
\begin{equation*}\label{sklar's theorem}
F_{\bX}(\bx)=C(F_{1}(x_{1}),\dots,F_{d}(x_{d})),\quad \bx=(\sqc{x}{d})\in \R^{d},
\end{equation*}
where $C$ is called a \emph{copula} of $\bX$.
The density $f_{\bX}$ can be written by
\begin{equation}\label{sklar theorem density form}
f_{\bX}(\bx)=c(F_{1}(x_{1}),\dots,F_{d}(x_{d})) f_{1}(x_{1})\cdots f_{d}(x_{d}),\quad \bx \in \R^{d},
\end{equation}
where $c$ denotes the density of $C$.

As mentioned in section \ref{introduction}, computing risk contributions is an important task in risk management.
A standard procedure of determining risk contributions involves two steps.
The first step is to compute the economic capital $\varrho(S)$ for a risk measure $\varrho$.
Risk measures map a loss random variable to a capital buffer that is required to cover the loss over a predetermined period such as one year or two weeks.
One of the most popular risk measures is the VaR defined by $\text{VaR}_{p}(X)=\inf\{x \in \R : \Prob(X\leq x) \geq p\}$ where $p\in (0,1)$ is called the {\it confidence level}.
Another popular measure is the \emph{expected shortfall (ES)} defined by $\text{ES}_p(X)=\frac{1}{1-p}\int_{p}^1 \text{VaR}_q(X)\dd q$ for $\E[|X|]<\infty$.
The second step is to allocate the capital $\varrho(S)$ to \changed{individual} $d$-exposures.
Mathematically, capital allocation addresses the problem of determining the vector of allocated capitals $(\sqc{\text{AC}}{d})$ that satisfies the {\it full allocation property}
\begin{equation}\label{full allocation property}
\varrho(S)= \sum_{j=1}^{d}\text{AC}_{j}.
\end{equation}
The Euler principle derives such AC's by utilizing the well-known Euler rule for a function $\bu \mapsto \varrho(\tra{\bu}\bX)$:
\begin{equation}\label{euler rule}
\varrho(\tra{\bu} \bX)=\sum_{j=1}^{d}u_{j}\frac{
  \partial \varrho(\tra{\bu} \bX)
}{
  \partial u_{j}
},\quad \bu \in \Lambda,
\end{equation}
where $\Lambda \subset \R^{d} \backslash \{\bzero\}$ is an open set such that $\bone_d \in \Lambda$, and $\varrho$ is positive homogeneous, that is, $\varrho(\lambda X)=\lambda \varrho(X)$ for $\lambda>0$.
For
\begin{equation}\label{derivative forula of AC via euler principle}
\AC{j}^{\varrho}:=\left. \frac{
  \partial \varrho(\tra{\bu} \bX)
}{
  \partial u_{j}
}\right|_{\bu=\bone_d}, \quad j=1,2,\dots,d,
\end{equation}
the full allocation property ($\ref{full allocation property}$) holds for the vector (AC$_{1}^{\varrho},\dots,$AC$^{\varrho}_{d}$) by taking $\bu=\bone_d$ in equation ($\ref{euler rule}$).
\changed{In addition, the allocated capital derived according to the Euler principle is RORAC compatible, which means that the profitability of each asset is consistently signaled via its return and AC; see \citet{tasche2008capital}.}
Since VaR$_{p}$ is positive homogeneous, the Euler principle can be applied, and the corresponding risk contributions are given by
\begin{equation}\label{VaR contributions}
\AC{j}^{\text{\scriptsize VaR}_{p}}:=
\left. \frac{
  \partial \text{VaR}_{p}(\tra{\bu} \bX)
}{
  \partial u_{j}
}\right|_{\bu=\bone_d}=
\E[X_{j}|X_{1}+\cdots+X_{d}=\text{VaR}_{p}(S)].
\end{equation}
We call the vector $\text{AC}^{\text{\scriptsize VaR}_{p}}:=(\text{AC}_{1}^{\text{\scriptsize VaR}_{p}},\dots,\text{AC}_{d}^{\text{\scriptsize VaR}_{p}})$ the {\it VaR contributions}.
Since we mainly focus on this form of allocated capital in this study, we drop the superscript VaR$_{p}$ and write \eqref{VaR contributions} as AC$=$($\AC{1},\dots,\AC{d}$).
Note that other forms of allocated capitals are also possible; for example, when the risk measure is ES, the \emph{ES contribution} is derived as
\begin{equation}\label{ES contributions}
\AC{j}^{\text{\scriptsize ES}_{p}}:=
\left. \frac{
  \partial \text{ES}_{p}(\tra{\bu} \bX)
}{
  \partial u_{j}
}\right|_{\bu=\bone_d}=
\E[X_{j}|X_{1}+\cdots+X_{d} \geq \text{VaR}_{p}(S)]
\end{equation}
by positive homogeneity of ES; see \cite{tasche2001conditional} for derivations of the last equalities in \eqref{VaR contributions} and \eqref{ES contributions}.

Even when the joint density of the portfolio loss vector $f_{\bX}$ is given explicitly, the analytical computation of AC is not straightforward since it often requires the joint distribution of $(X_{j},S)$, which is in general difficult to derive.
\changed{A possible case when VaR/ES contributions can be explicitly derived is when $\bX$ is modelled through a \emph{multivariate regularly varying (MRV)} distribution. 
In this case, VaR$_{p}(\tra{\bu}\bX)$ and ES$_{p}(\tra{\bu}\bX)$ asymptotically have explicit formulas as $p \rightarrow 1$, and thus VaR and ES contributions can be explicitly derived via \eqref{derivative forula of AC via euler principle}; see \citet{kley2016risk}.
In spite of their potential appeals, we avert from these formulas since the MRV assumption requires marginal tail-homogeneity; moreover, these formulas depend on integrals with respect to spectral measures, which is beset with other difficulties.}
A possible numerical method to calculate VaR contributions is the crude {\it MC} method,
in which the {\it pseudo VaR contribution}
\begin{equation}\label{pseudo VaR contributions}
\text{AC}_{\delta}=\E[\bX\ |\ S \in [\text{VaR}_{p}(S)-\delta,\text{VaR}_{p}(S)+\delta]\hspace{1mm}],
\end{equation}
is computed for a sufficiently small bandwidth $\delta>0$.
Since the probability $\Prob(S \in [\text{VaR}_{p}(S)-\delta,\text{VaR}_{p}(S)+\delta])$ is positive, the right hand side of ($\ref{pseudo VaR contributions}$) can be written as 
\begin{equation*}\label{pseudo VaR contribution expectation form}
\text{AC}_{\delta}=\frac{
  \E[\bX 1_{[S \in A_{\delta}]}]
}{
  \Prob(S \in A_{\delta})
},\quad \text{where}\quad A_{\delta}=[\text{VaR}_{p}(S)-\delta,\text{VaR}_{p}(S)+\delta].
\end{equation*}
This expression allows one to construct the estimator of the pseudo VaR contributions given by
\begin{equation}\label{MC estimator}
\estAC{MC}{\delta,N}=\frac{
    \sum_{n=1}^{N} \bX^{(n)}  1_{[S^{(n)}\in A_{\delta}]} 
}{
  \sum_{n=1}^{N}1_{[S^{(n)}\in A_{\delta}]}
}=
\frac{1}{M_{\delta,N}}\sum_{n=1}^{N} \bX^{(n)}  1_{[S^{(n)}\in A_{\delta}]},
\end{equation}
where $N>0$ is the sample size; $\bX^{(1)},\dots,\bX^{(N)}$ are independent and identically distributed (i.i.d.) samples from $F_{\bX}$; $S^{(n)}:=X_{1}^{(n)}+\cdots+X_{d}^{(n)}$ are i.i.d. samples from $F_{S}$ for $n=1,\dots,N$; and $M_{\delta,N}:=\sum_{n=1}^{N}1_{[S^{(n)}\in A_{\delta}]}$ is the number of samples contained in $A_{\delta}$. 
We call $(\ref{MC estimator})$ the {\it MC estimator}.
By setting $\delta$ and $N$ as sufficiently small and large, respectively, one can expect that the MC estimator approximates the true VaR contributions.
Note that this method is available only when $\delta$ is positive, since $\Prob(S \in A_{0})=\Prob(S=\text{VaR}_{p}(S))=0$ by continuity of $F_{S}$.

As long as the i.i.d. samples from $F_{\bX}$ can be generated, one can estimate AC$_\delta$ by constructing the estimator \eqref{MC estimator}.
However, this estimator suffers from an inevitable bias \changed{caused by changing the condition $\{S=\text{VaR}_p(S)\}$ to $\{S\in A_{\delta}\}$.}
The bias of the MC estimator can be decomposed by 
\begin{equation*}
\estAC{MC}{\delta,N}-\text{AC}=b_{\delta}(N)+b(\delta),
\end{equation*}
where $b_{\delta}(N)=\estAC{MC}{\delta,N}-\text{AC}_{\delta}$ and $b(\delta)=\text{AC}_{\delta}-\text{AC}$.
$\delta$ should be taken as small as possible to reduce $b(\delta)$.
However, when $\delta$ is quite small, it is difficult to ensure a large enough sample size $M_{\delta,N}$ to keep the first term $b_{\delta}(N)$ small since $\E[M_{\delta,N}]=N \Prob(S \in A_{\delta})$, and $\Prob(S \in A_{\delta})$ is typically much less than $1-p$.

To overcome this problem, several estimators have been proposed in the literature.
First, \changed{\citet{glasserman2005importance} and \citet{glasserman2005measuring} proposed the \emph{IS estimator} in order to 
increase the number of samples that belong to $A_{\delta}$.
The IS estimator is defined by
\begin{equation}\label{IS estimator}
\estAC{IS}{\delta,N,\tilde f}=
\frac{1}{\tilde{M}_{\delta,N}}\sum_{n=1}^{N} \tilde{\bX}^{(n)} l(\tilde{\bX}^{(n)}) 1_{[\tilde{S}^{(n)}\in A_{\delta}]},\quad \text{where}\quad
l(\tilde{\bX}^{(n)})= \frac{f_{\bX}(\tilde{\bX}^{(n)})}{{\tilde f}(\tilde{\bX}^{(n)})};
\end{equation}
here $N>0$ is the sample size; $\tilde{\bX}^{(1)},\dots,\tilde{\bX}^{(N)}$ are i.i.d. samples from the distribution with density $\tilde f$, called the \emph{instrumental distribution}; ${\tilde S}^{(n)}:=\tilde{X}_{1}^{(n)}+\cdots+\tilde{X}_{d}^{(n)}$ for $n=1,\dots,N$; and $\tilde{M}_{\delta,N}:=\sum_{n=1}^{N}1_{[\tilde{S}^{(n)}\in A_{\delta}]}$.
The main challenge of this method is to choose an appropriate $\tilde f$, so that the event $\{\tilde{S}^{(n)}\in A_{\delta}\}$ occurs frequently while the likelihood ratio $l(\tilde{\bX}^{(n)})$ remains non-volatile enough; see \citet{huang2007computation,liu2015simulating,mausser2007economic}, and \citet{tasche2009capital} for successful applications of IS to estimate risk contributions achieved by utilizing special assumptions.
}
Second, the {\it NW kernel estimator} proposed in \citet{tasche2009capital} is defined by
\begin{equation}\label{nw estimator}
\estAC{NW}{\phi,h,N}=\frac{
    \sum_{n=1}^{N} \bX^{(n)}  \phi\left(\frac{S^{(n)}-\text{VaR}_{p}(S)}{\Delta}\right)
}{
  \sum_{n=1}^{N}\phi\left(\frac{S^{(n)}-\text{VaR}_{p}(S)}{\Delta}\right)
},
\end{equation}
where $\phi$ is the kernel density and $\Delta>0$ is the bandwidth.
Since this estimator can be interpreted as a smoothing modification of the MC estimator ($\ref{MC estimator}$) by kernel $\phi$, it shares the same bias trade-off explained above.
Furthermore, the bias and asymptotic standard deviation of the NW estimator \citep[see, for example,][]{hansen2009nonparametric} cannot be computed easily because they require an evaluation of the total loss density $f_{S}(s)$ at $s=\text{VaR}_{p}(S)$.
Finally, \citet{hallerbach2003decomposing} and \citet{tasche2004approximations} constructed estimators by assuming a regression model among the losses of the form:
\begin{equation*}\label{model assumption}
\bX=g_{\bbeta}(S)+\bep,
\end{equation*}
where $g_{\bbeta}(s): \R \rightarrow \R^{d}$ is a function parameterized by $\bbeta$, and $\bep$ is an error random vector such that $\E[\bep|S=\text{VaR}_{p}(S)]=\bzero$.
For an estimator $\hat{\bbeta}_{N}$ of $\bbeta$,
we call the following estimator the {\it GR estimator}:
\begin{equation}\label{gr estimator}
\estAC{GR}{g_{\bbeta},N}:=
g_{\hat \bbeta_{N}}(\text{VaR}_{p}(S)).
\end{equation}
Although this estimator is intuitive and can easily be computed, it is in general difficult to construct an appropriate model $g_{\bbeta}$ and estimator $\hat{\bbeta}_{N}$ of $\bbeta$, unless samples from $F_{\bX|S=\text{VaR}_{p}(S)}$ are available.
A notable exception is the case wherein $\bX$ follows an elliptical distribution.
In this case, the following result holds: 
\begin{equation}\label{VaR contributions elliptical case}
\E[\bX|S=\text{VaR}_{p}(S)]=\E[\bX]+\frac{\text{Cov}(\bX,S)}{\text{Var}(S)}(\text{VaR}_{p}(S)-\E[S]);
\end{equation}
see, for example, \citet[][]{mcneil2015quantitative}.
The true VaR contributions are then provided by setting $g_{\beta}(s)=\beta_{0}+\beta_{1}s$, where
\begin{equation}\label{true coefficients elliptical case}
\beta_{0}=\E[\bX]-\frac{\text{Cov}(\bX,S)}{\text{Var}(S)} \E[S] \quad \text{and}\quad \beta_{1}=\frac{\text{Cov}(\bX,S)}{\text{Var}(S)}.
\end{equation}
Since these coefficients are the minimizers of $\E[\bep^{2}]=\E[(\bX-\beta_{0}-\beta_{1}S)^{2}]$, the OLS estimators of $(\beta_0,\beta_1)$ are calculated based on the unconditional samples of $\bX$ and $S$ converges to the true parameters \eqref{true coefficients elliptical case} as $N \rightarrow \infty$.

\section{MCMC estimators}\label{sec:MCMC estimators}
As seen in section \ref{capital allocation problem}, the essential problem in estimating VaR contributions is that the conditional samples from $F_{\bX|S=\text{VaR}_{p}(S)}$ are unavailable.
To solve this problem, we introduce the MCMC method wherein a given distribution is simulated by constructing a Markov chain whose stationary distribution is the desired one.
By allowing Markovian-type dependence within the samples, the MCMC allows us to simulate a wide variety of distributions.
In this section, we briefly review MCMC, especially the Metropolis-Hastings algorithm as a major subclass of MCMC methods.

\subsection{A brief introduction to MCMC}\label{brief introduction to MCMC}
Let $E\subseteq \R^{d}$ be a set and $\mathcal E$ be a $\sigma$-algebra on $E$. 
A {\it Markov chain} is a sequence of $E$-valued random variables $(\bX^{(1)},\bX^{(2)},\dots)$ satisfying the Markov property;
\begin{equation*}\label{markov property}
\Prob(\bX^{(n+1)}\in A\ |\ \bX^{(k)}=\bx^{(k)},k\leq n)=\Prob(\bX^{(n+1)}\in A\ |\ \bX^{(n)}=\bx^{(n)}),
\end{equation*}
for all $n\geq 1$, $A \in \mathcal E$, and $\bx^{(1)},\dots,\bx^{(n)}\in E$.
A Markov chain is characterized by its \emph{stochastic kernel} $K: E \times \mathcal E \rightarrow \changed{[0,1]}$, given by $\bx \times A \mapsto K(\bx,A):=\Prob(\bX^{(n+1)}\in A|\bX^{(n)}=\bx)$.
If there exists a probability distribution $\pi$ such that $\pi(A)=\int_{E} \pi(\text{d}\bx) \changed{K}(\bx,A)$ for any $\bx\in E$ and $A \in \mathcal E$, then $\pi$ is called the {\it stationary distribution}.
See, for example, \citet{nummelin2004general} for the general theory of Markov chain\changed{s}.

The MCMC method is widely used for simulating a distribution by generating a Markov chain with the given distribution as a stationary distribution $\pi$.
For some distribution $\pi$ and $\pi$-measurable vector-valued function $\bm h$ on $E$, our estimand is denoted as
\begin{equation}\label{theoretical quantity}
{\bm \pi}({\bm h}):=\int_{E} {\bm h}(\bx)\pi(\text{d}\bx).
\end{equation}
The MCMC estimator of ($\ref{theoretical quantity}$) is given by 
\begin{equation}\label{mcmc estimator in general}
\hat{{\bm \pi}}_{N}(\bm h):=\frac{1}{N}\sum_{n=1}^{N}{\bm h}(\bX^{(n)}),
\end{equation} 
where $(\bX^{(1)},\dots,\bX^{(N)})$ is a sample path from time $1$ to $N$ (we call it an {\it $N$-path}) of a Markov chain whose stationary distribution is $\pi$.
The distribution $\pi$ is called the {\it target distribution}.
Since it is determined by the problem at hand, the problem is to find a stochastic kernel $K$ such that it has the stationary distribution $\pi$, and sample paths of its Markov chain can easily be generated.

One of the most popular stochastic kernels is the MH kernel defined by
\begin{equation*}\label{MH kernel}
K(\bx,{\rm d}\by) = k(\bx,\by){\rm d}\by + r(\bx)\delta_{\bx}(\by),
\end{equation*}
where
\begin{eqnarray*}
k(\bx,\by)&=&q(\bx,\by)\alpha(\bx,\by);\\
\alpha(\bx,\by) &=& \begin{cases}
\text{min}\left[ \frac{\pi(\by)q(\by,\bx)}{\pi(\bx)q(\bx,\by)}, \text{ }1\text{ } \right]&\text{ if } \pi(\bx)q(\bx,\by)>0,\\
0&\text{ otherwise};
\end{cases}\\
r(\bx) &=& 1- \int_{E} k(\bx, \by ){\rm d}\by;
\end{eqnarray*}
$\delta_{\bx}$ is the Dirac delta function; $q: E\times E \rightarrow \R_{+}$ is a function such that $\bx \mapsto q(\bx,\by)$ is measurable for any $\by \in E$; and $\by \mapsto q(\bx,\by)$ is a probability density for any $\bx \in E$.
This function $q$ is called a {\it proposal density}.
It can be shown that the MH kernel has stationary distribution $\pi$; see \citet{tierney1994markov}.
Under the three conditions (i)--(iii) where (i) at least one vector $\bx^{(0)}\in \text{supp}(\pi) $ is known, where $\text{supp}(\pi):=\{\bx \in E: \pi(\bx)>0\}$; (ii) samples  from $q(\bx,\cdot)$ can be generated for any $\bx \in E$; and (iii) the ratio $\pi(\by)/\pi(\bx)$ can be calculated for any $\bx,\by \in E$,  we can generate an $N$-path of the desired Markov chain by the following \emph{MH algorithm}:

\begin{flushleft}
{\bf Algorithm 1:} (MH algorithm)
\end{flushleft}
\begin{enumerate}\setlength{\itemsep}{-5pt}
\item[1.] Fix a sample size $N>0$, proposal density $q$, and initial value $\bm X^{(0)}=\bx^{(0)}\in \text{supp}(\pi)$.\\
\item[2.] {\bf For} $n=0,1,\dots,N-1$, do:\\
\item[3.] \hspace{10mm}Generate $\bX_{\ast}^{(n)}\sim q(\bX^{(n)},\hspace{1mm}\cdot\hspace{1mm})$ and $U\sim \mathcal U(0,1)$.\\
\item[4.] \hspace{10mm}Set 
\begin{equation}\label{acceptance probability}
\alpha_{n}:=\alpha(\bX^{(n)},\bX_{\ast}^{(n)})=\min \left[\hspace{2mm}
\frac{\pi(\bX_{\ast}^{(n)})q(\bX_{\ast}^{(n)},\bX^{(n)})}{\pi(\bX^{(n)})q(\bX^{(n)},\bX_{\ast}^{(n)})},
 \ 1\ \right].
\end{equation}
\item[5.]\hspace{10mm} Set 
\begin{equation*}
\bX^{(n+1)}:=1_{[U\leq \alpha_{n}]} \bX_{\ast}^{(n)}+ 1_{[U>\alpha_{n}]} \bX^{(n)}.\\
\end{equation*}
\item[6.] Return $(\bX^{(1)},\dots,\bX^{(N)})$. 
\end{enumerate}
We call $\alpha_{n}:=\alpha(\bX^{(n)},\bX_{\ast}^{(n)})$ in $(\ref{acceptance probability})$ the {\it acceptance probability} at the $n$th iteration.
Based on the $N$-path $(\bX^{(1)},\dots,\bX^{(N)})$ generated in Algorithm 1, the MCMC (MH) estimator $(\ref{mcmc estimator in general})$ is constructed.

Under regularity conditions, the MCMC estimator $\hat{{\bm \pi}}_{N}({\bm h})$ satisfies \emph{consistency} and the \emph{central limit theorem (CLT)}.
First, the MCMC estimator is consistent if
\begin{equation}\label{MCMC estimator consistency}
\lim_{N\rightarrow \infty }\hat{{\bm \pi}}_{N}({\bm h}) =  {\bm \pi}({\bm h})\quad \text{ a.s.},
\end{equation}
for any $\pi$-integrable function ${\bm h}$ and any initial state $\bX^{(0)}=\bx^{(0)} \in \text{supp}(\pi)$.
Next, CLT holds if
\begin{equation}\label{central limit theorem}
\sqrt{N}\{\hat{{\bm \pi}}_{N}({\bm h})- {\bm \pi}({\bm h})\} \darrow {\mathcal N}_{d}(\bzero,{\bf \Sigma}_{{\bm h}})  \quad \text{as} \quad N \rightarrow \infty,
\end{equation}
where the asymptotic variance matrix is given by
\begin{equation}\label{asymptotic variance}
{\bf \Sigma}_{{\bm h}}:=\text{Var}_{\pi}[{\bm h}(\bX^{(1)})]+2\sum_{k=1}^{\infty}\text{Cov}_{\pi}[{\bm h}(\bX^{(1)}),{\bm h}(\bX^{(k+1)})].
\end{equation}
Since the asymptotic variance $(\ref{asymptotic variance})$ can rarely be computed in a real situation, it is estimated from the sample path $(\bX^{(1)},\dots,\bX^{(N)})$ generated in Algorithm 1.
One popular estimator of ${\bf \Sigma}_{{\bm h}}$ is the so-called \emph{batch means estimator}; see \citet{geyer2011introduction}.
For an $N$-path $(\bX^{(1)},\dots,\bX^{(N)})$, the batch means estimator $\hat{{\bf \Sigma}}_{{\bm h},N}$ is defined by
\begin{equation}\label{batch mean estimator}
\hat{{\bf \Sigma}}_{{\bm h},N}=\frac{L_{N}}{B_{N}-1}\sum_{b=1}^{B_{N}}\{
\hat{{\bm \pi}}_{N,b}({\bm h}) -\hat{{\bm \pi}}_{N}({\bm h})  
\}\{
\tra{\hat{{\bm \pi}}_{N,b}({\bm h}) -\hat{{\bm \pi}}_{N}({\bm h})  
\}},
\end{equation}\label{mean over the batch}
where $L_{N}$ and $B_{N}$ are positive integers satisfying $N=L_{N} B_{N}$, and 
\begin{equation*}
\hat{{\bm \pi}}_{N,b}({\bm h})=\frac{1}{L_{N}}\sum_{l=(b-1)L_{N}}^{bL_{N}-1}{\bm h}(\bX^{(l)}) \quad \text{ for }\quad  b=1,2,\dots,B_{N}.
\end{equation*}
$L_{N}$ is called the \emph{batch length}, and $B_{N}$ is the number of batches.
Under regularity conditions, the batch means estimator $\hat{{\bf \Sigma}}_{{\bm h},N}$ converges to ${\bf \Sigma}_{{\bm h}}$ as $N \rightarrow \infty$; see \citet{jones2006fixed} and \citet{vats2015multivariate}.
By using CLT of $\hat{{\bm \pi}}_{N}({\bm h})$ and the consistency of $\hat{{\bf \Sigma}}_{{\bm h},N}$, one can construct an approximate confidence interval of the true quantity ${\bm \pi}({\bm h})$ based on an $N$-path of the Markov chain.

\subsection{Choice of the proposal distribution}\label{choice of the proposal distribution}
When implementing the MH, an appropriate choice of the proposal function $q$ is necessary since it affects the asymptotic variance \eqref{asymptotic variance}.
Since ${\bf \Sigma}_{h}$ can rarely be calculated explicitly in a real situation, a post-implementation review is usually conducted; that is, the goodness of the selected proposal distribution is evaluated after performing the MH.
In this section, we introduce two methods for evaluating the selected proposal distribution.
We also provide some families of proposal distributions for later use.

In practice, there are two prevalent methods to determine the performance of the proposal distribution. 
One is to inspect the {\it autocorrelation plots} of the marginal sample paths.
For an $N$-path $(\bX^{(1)},\dots,\bX^{(N)})$, vector-valued measurable function ${\bm h}(\bX)=\tra{(h_{1}(\bX),\dots,h_{d}(\bX))}$, and the MH estimator $\hat{{\bm \pi}}_{N}({\bm h})=\tra{(\hat{\pi}_{N,1},\dots,\hat{\pi}_{N,d})}$, the sample autocorrelations $\hat{r}_{j}(k):=\hat{R}_{j}(k)/\hat{R}_{j}(0)$ are drawn against the lag $k=0,1,2,\dots$, where
\begin{equation*}\label{autocovariance}
\hat{R}_{j}(k):=\frac{1}{N-k}\sum_{n=1}^{N-k}\{h_{j}(\bX^{(n)})-\hat{\pi}_{N,j}\}\{h_{j}(\bX^{(n+k)})-\hat{\pi}_{N,j}\},
\end{equation*}
for $j=1,2,\dots,d$.
From the form of the asymptotic variance $\eqref{asymptotic variance}$, one can expect that asymptotic variance ${\bf \Sigma}_{{\bm h}}$ is small if the autocorrelation plots steadily decline to zero as the lags increase.
Another implicative quantity is the {\it acceptance rate (ACR)}, which is the percentage of times a candidate $\bX_{\ast}$ is accepted through the whole run.
\changed{Since an appropriate ACR varies according to the shape and dimension of the target distribution, there is no general standard on the range of ACR; see \cite{rosenthal2011optimal} and references therein.}
Meanwhile, altering proposal distribution is generally suggested when extremely low or high ACR is observed.
\changed{Figure~\ref{Fig:MH:selection} illustrates two typical situations when such extreme ACRs are observed.
The first situation, highlighted in red in Figure~\ref{Fig:MH:selection}, represents the case when the chain tends to be stuck to one point due to, for example, a too high variance of the proposal distribution.
The resulting estimator can then have very high variance. 
The second situation, highlighted in blue, shows the case where the chain moves only around one mode of the target density and does not traverse the entire support of the target distribution.
This fallaciously high ACR yields high bias caused by ignoring other modes.
}

\begin{figure}[t]
\includegraphics[scale=0.25]{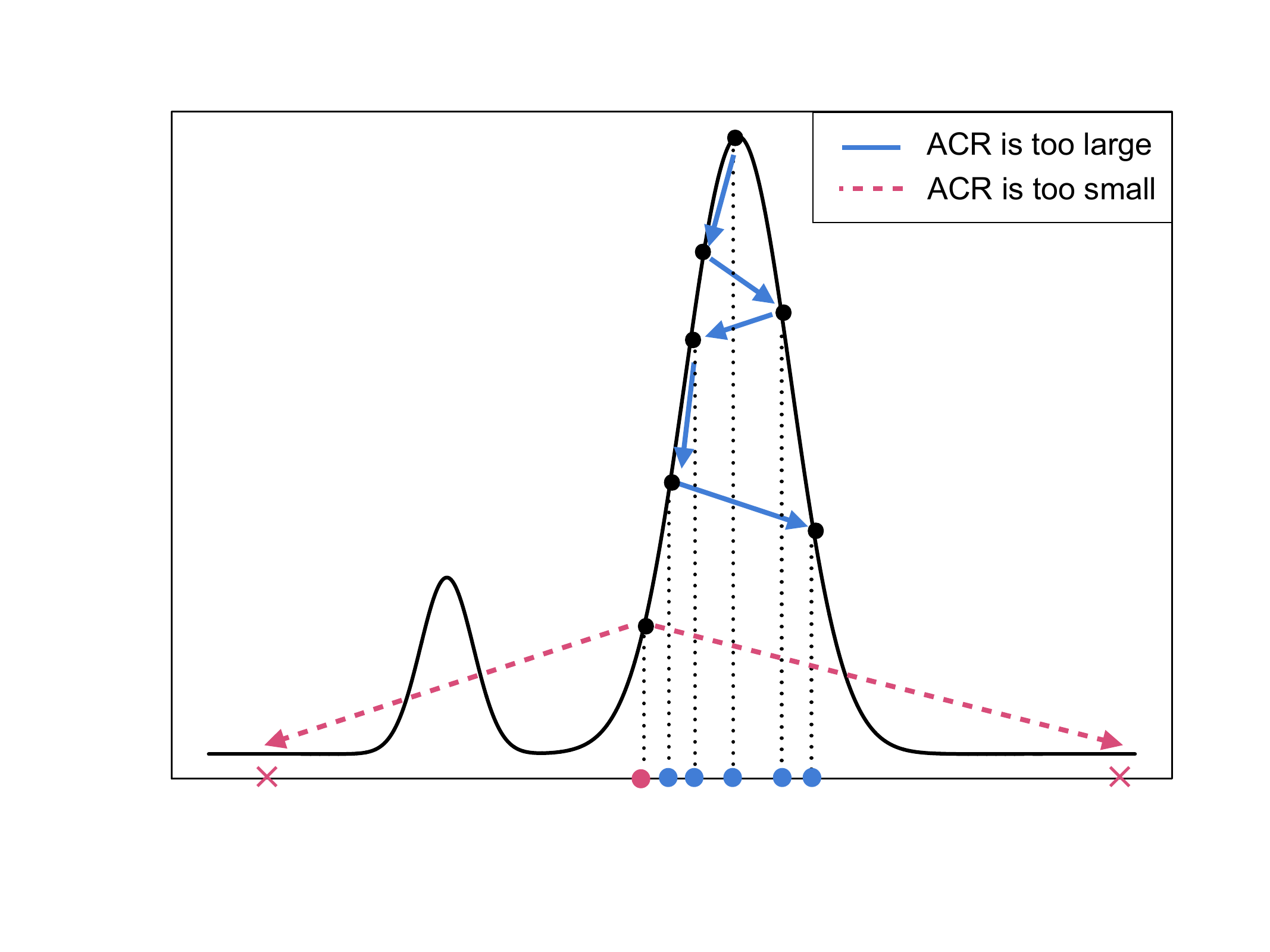}
\centering
\vspace{-10mm}
\caption{\changed{Two typical situations when an extremely low or high acceptance rate (ACR) of the Metropolis-Hastings algorithm is observed.
In the case highlighted in red, proposed candidates have very low values of the target density and the chain tends to be stuck to one point.
In the situation highlighted in blue, the chain only moves around one mode and does not traverse the entire support of the target distribution.
}}
\label{Fig:MH:selection}
\end{figure}

Typically, proposal distribution $q$ is selected among certain classes of distributions.
To find an appropriate $q$ depending on the target distribution, several classes of proposal distributions are presented in order.
First, if the proposal function is of the form $q(\bx,\by)=f(\by-\bx)$ for some density $f$, the candidate $\bX_{\ast}$ is drawn according to the following process: 
\begin{equation}\label{random walk proposal}
\bX_{\ast}=\bX+\bZ,\quad\text{where}\quad\bZ \sim f,
\end{equation}
and $\bX$ is the current state.
This type of $q$ is called the {\it random walk} proposal distribution.
In the case wherein $f$ is symmetric around the origin, the acceptance probability $(\ref{acceptance probability})$ is written simply as $\alpha(\bx,\by)=\min \left[ \frac{\pi(\by)}{\pi(\bx)},1\right]$.
Second, when $q(\bx,\by)=f(\by)$ for some density $f$, then candidate $\bX_{\ast}$ is updated by 
\begin{equation}\label{independent proposal}
\bX_{\ast}=\bZ,\quad\text{where}\quad\bZ \sim f.
\end{equation}
This $q$ is called the {\it independent} proposal distribution.
The two proposal distributions---random walk and independent---are widely used due to their simplicities.
However, these proposal distributions often fail to perform well when the target distribution $\pi$ is heavy-tailed.
To overcome this problem, the {\it mixed preconditioned Crank-Nicolson (MpCN)} proposal distribution is proposed by \citet{kamatani2014efficient}.
This proposal distribution updates the candidate according to the following process: 
\begin{equation}\label{MpCN proposal}
\bm X_{\ast}=\bmu+\rho^{\frac{1}{2}} (\bX-\bmu)+(1-\rho)^{\frac{1}{2}} Z^{-\frac{1}{2}}\cdot{\bm W},
\end{equation} 
where $\rho \in (0,1)$, $Z$ follows the gamma distribution with shape parameter $d/2$ and scale parameter $||{\bf \Sigma}^{-\frac{1}{2}}(\bX-\bmu)||^{2}/2$, and $\bW \sim {\mathcal N_{d}}(\bzero,{\bf \Sigma})$ for some $d$-vector $\bmu\in \R^{d}$ and $d\times d$ matrix ${\bf \Sigma} \in {\mathcal M}^{d\times d}_{+}$\changed{, where ${\mathcal M}^{d\times d}_{+}$ is the set of all positive definite $d\times d$ matrices}.
Throughout this paper, $\rho$ is set to be 0.8 as a default choice in \citet{kamatani2014efficient}.
Ideally, $\bmu$ and ${\bf \Sigma}$ are set to be $\bmu=\E[\bX]$ and ${\bf \Sigma}=\Var[\bX]$, while in practice, they can be replaced by their rough estimates since moments of $\bX$ are typically unknown.
Note that the original MpCN proposed in \citet{kamatani2014efficient} is the standardized version (that is, $\bmu=\bzero$ and ${\bf \Sigma}=\bI_{d}$, where $\bI_d$ is an identity matrix).
The acceptance probability ($\ref{acceptance probability}$) of the MpCN proposal distribution can be written as
\begin{equation*}\label{acceptance probability of MpCN}
\alpha(\bX,\bX_{\ast})=\left[\hspace{1mm}
\frac{\pi(\bX_{\ast})}{\pi(\bX)} 
\left(
\frac{
  ||{\bf \Sigma}^{-\frac{1}{2}}(\bX-\bmu)||
  }{
  ||{\bf \Sigma}^{-\frac{1}{2}}(\bX_{\ast}-\bmu)||
  }
\right)^{-d},1\  \right].
\end{equation*}
One of the key differences between this proposal distribution and the first two simple ones is that in the MpCN, not only the mean but also the variance of the candidate changes with the current state $\bX$.
Since the MpCN proposal distribution admits larger jumps in the tail, a better acceptance rate can be expected even when $\pi$ is heavy-tailed.

\section{The proposed method}\label{the proposed method}

In this section, we propose a new estimator of VaR contributions that utilizes the MCMC method, especially the MH algorithm, to achieve an efficient estimation.
Theoretical study on the consistency and asymptotic normality of our MH-based estimator is provided in the Appendix for certain classes of risk models.

\subsection{Assumptions and setup}\label{the set-up}
We start by declaring assumptions under which our MH estimator is applicable.
\begin{assumption}
On applying the MH estimator, we suppose the following:
\begin{enumerate}\setlength{\itemsep}{-4mm}
\item[(i)] an explicit form of the joint loss density $f_{\bX}$ is given, and thus one can compute the quantity $f_{\bX}(\bx)$ for any $\bx \in \R^{d}$;\\
\item[(ii)] a generator of i.i.d. samples from the loss distribution  $F_{\bX}$ is available; and\\
\item[(iii)] neither the explicit form of total loss density $f_{S}$ nor the way to compute the quantity $f_{S}(\text{VaR}_{p}(S))$ is available. 
\end{enumerate}
\end{assumption}
Note that assumption 1 (ii) enables us to generate samples from $F_{S}$ by setting $S^{(n)}=X_{1}^{(n)}+\cdots+X_{d}^{(n)}$ where ($X_1^{(n)},\dots,X_d^{(n)}$) is an $n$th sample from $F_{\bX}$.
\changed{Assumption 1 (iii) implies that, while $f_S$ can be derived from $f_{\bX}$ in the form of an integral, the integral is not straightforward to calculate.}
Such a situation typically occurs when the joint loss density $f_{\bX}$ is specified through a copula density $c$ and marginal loss densities $\sqc{f}{d}$.
The resulting joint loss density $f_{\bX}$ is specified as in formula $(\ref{sklar theorem density form})$.

As is mentioned in section~\ref{capital allocation problem}, computing VaR contributions involves two steps; the first is to estimate VaR$_{p}(S)$, and the second is to estimate VaR contributions AC$=\E[\bX|S=\text{VaR}_{p}(S)]$ with VaR$_{p}(S)$ replaced by its estimate.
The estimation of VaR$_{p}(S)$ in the first step is often conducted with an MC simulation.
Based on i.i.d. samples ($S^{(1)},\dots,S^{(N)}$) from $F_{S}$, VaR$_{p}(S)$ can be estimated, for example, by $\widehat{\text{VaR}}_{p}(S)=S^{\lceil Np \rceil}$, where $\lceil Np \rceil$ is the smallest integer greater than $Np$, and $S^{\lceil Np\rceil}$ is the $\lceil Np\rceil$th largest sample among $N$ samples.
Since $\widehat{\text{VaR}}_{p}(S)$ is a deterministic quantity, one can regard it as a constant $v = \widehat{\text{VaR}}_{p}(S)$.

In the second step, AC$=\E[\bX|S=v]$ is estimated.
According to the crude MC method, VaR contributions are estimated by \eqref{MC estimator}.
As explained in section \ref{capital allocation problem}, the problem of this two-step procedure is that the estimator of VaR contributions in the second step is typically biased.
To address this issue, we develop an MCMC (MH)-based estimator that achieves consistency and high sample efficiency.

\subsection{The MH estimator of VaR contributions}\label{subsec: the MCMC estimator of VaR contributions}
We propose to estimate VaR contributions by sequentially updating samples so that all samples lie in the set $\mathcal S_v = \{\bx \in \R^{d}:x_{1}+\cdots+x_{d}=v\}$.
The updating rule is established so that the componentwise sum of each sample is preserved and the samples are taken from the distribution $F_{\bX|S=v}$.
We start to describe the MH-based estimator by reformulating the problem of computing VaR contributions.
\changed{Since MH requires a density of the target distribution but the $d$-dimensional density is not well-defined on the degenerated space $\mathcal S_v$, we consider the first $d'$ losses $\bX'=\tra{(\sqc{X}{d'})}$ where $d'=d-1$.}
By the full allocation property $(\ref{full allocation property})$, it holds that
\begin{equation*}
\E[\bX|S=v]=\tra{(\E[\bX'|S=v],\hspace{1mm}v-\tra{\bone_{d'}} \E[\bX'|S=v])},
\end{equation*}
where $S=X_{1}+\cdots+X_{d}$.
Therefore, computation of VaR contributions AC=$\E[\bX|S=v]$ can be reduced to estimate VaR contributions of the $d'$-subportfolio, denoted by AC$'=\E[\bX'|S=v]$.
In our method, this quantity AC$'$ is estimated by generating samples directly from $F_{\bX'|S=v}$.
The conditional joint density of $\bX'$ given $\{S=v\}$ can be written as
\begin{equation*}\label{conditional density reformulation}
f_{\bX'|S=v}(\bx')=\frac{f_{\bX',S}(\bx',v)}{f_{S}(v)}
=\frac{f_{\bX}(\bx',v-\tra{\bone_{d'}}\bx')}{f_{S}(v)},\quad  \bx' \in \R^{d'},
\end{equation*}
where the last equation follows from a linear transformation $\tra{(\bX',S)}\mapsto \bX$.
At this point, sampling directly from $f_{\bX'|S=v}$ is difficult since the total loss density $f_{S}(v)$ is not easy to evaluate.

By taking $E=\R^{d'}$, $h(\bx)=\bx$, and $\pi(\bx)=f_{\bX'|S=v}(\bx)$ in the notations presented in subsection $\ref{brief introduction to MCMC}$, our problem of estimating VaR contributions can be reduced to estimate ${\bm \pi}({\bm h})=\E[\bX'|S=v]$ in ($\ref{theoretical quantity}$) by MCMC.
Even though it is challenging to compute $f_{\bX'|S=v}$, we can compute the acceptance probability ($\ref{acceptance probability}$) given by
\begin{equation*}\label{reduced acceptance probability}
\alpha(\bx,\by)=
\min \left[
\frac{f_{\bX'|S=v}(\by) q(\by,\bx)}{f_{\bX'|S=v}(\bx)  q(\bx,\by)},
 \ 1\right]
 =\min \left[
\frac{
f_{\bX}(\by,v-\tra{\bone_{d'}}\by)
q(\by,\bx)
}{
f_{\bX}(\bx,v-\tra{\bone_{d'}}\bx)
q(\bx,\by)}
, \ 1\right],
\end{equation*}
for any $\bx,\by$ by assumption 1 (i).
Note that the term $f_{S}(v)$ disappears by taking the ratio of $f_{\bX'|S=v}(\by)$ to $f_{\bX'|S=v}(\bx)$.
Therefore, under an appropriate choice of the proposal density $q$, the MH algorithm (algorithm 1) allows one to generate $N$-paths of the Markov chain whose stationary distribution is $\pi(\bx)=f_{\bX'|S=v}(\bx)$.
Based on a sample path, we can construct the MH estimator $\hat{{\bm \pi}}_{N}({\bm h})$ defined by $(\ref{mcmc estimator in general})$.
The algorithm to compute the MH estimator of VaR contributions is summarized in the following algorithm.
\begin{flushleft}
{\bf Algorithm 2:} (MH estimator of VaR contributions $\E[\bX|S=\text{VaR}_{p}(S)]$)
\end{flushleft}
\begin{enumerate}\setlength{\itemsep}{0mm}
\item[1.] Estimate VaR as $v=\widehat{\text{VaR}}_{p}(S)$ by MC samples.
\item[2.] Fix the sample size $N>0$, proposal distribution $q$, and initial value $\bX^{(0)}=\bx^{(0)} \in \text{supp}(f_{\bX'|S=v})$.
\item[3.] Perform Algorithm 1 (MH) for the given $N$, $q$, and $\bx^{(0)}$ to generate an $N$-path $(\bX'^{(1)},\dots,\bX'^{(N)})$. 
\item[4.] Set
\begin{equation}\label{MCMC estimator}
\estAC{MCMC}{q,N}=\frac{1}{N}\sum_{n=1}^{N}\bX^{(n)}\quad \text{where} \quad
\bX^{(n)}:=\tra{(\bX'^{(n)},v-\tra{\bone_{d'}}\bX'^{(n)})},
\end{equation}
to estimate VaR contributions AC$=\E[\bX|S=v]$.
\end{enumerate}

\changed{Note that Algorithm 2 can be easily extended to other choices of the function $h$, that is, one can estimate $\E[h(\bX)|S=v]$ for general functions $h$ by replacing \eqref{MCMC estimator} with $\frac{1}{N}\sum_{n=1}^{N}h(\bX^{(n)})$.}
Moreover, under regularity conditions, consistency and asymptotic normality of the MH estimator~\eqref{MCMC estimator} hold; see Appendix~\ref{sec: consistency and asymptotic normality} for more detail.

\changed{
\begin{remark}[More advanced MH algorithms]
On estimating VaR contributions, the MH algorithm can be extended to other MCMC methods.
To be precise, most MCMC methods, such as \emph{Gibbs sampler} \citep{geman1984stochastic,gelfand1990sampling} and \emph{Metropolis-adjusted Langevin algorithm} \citep[\emph{MALA,}][]{roberts1996exponential}, are special cases of MH by choosing the proposal distribution $q$ as a function of the target distribution $\pi$.
The Gibbs sampler incorporates the \emph{full conditional distributions}
$\pi_{j}(x_j|\bx_{-j}):=\pi(\bx)/\int_{\R}\pi(\bx)\dd x_{j}$ where $\bx_{-j}=(x_1,\dots,x_{j-1},x_{j+1},\dots,x_{d'})$ for $j=1,\dots,d'$ into $q$ and the MALA utilizes the gradient $\nabla \text{log}\pi(\bx)$ to define $q$.
For our target density $\pi(\bx)=f_{\bX|S=\text{VaR}_p(S)}$, the full conditional distributions $\pi_{j}(x_j|\bx_{-j}),\ j=1,\dots,d'$, are rarely available. 
On the other hand, $\nabla \text{log}\pi(\bx)$ is available when $\frac{\partial}{\partial x_j}f_{\bX}(x_1,\dots,x_{d'},v-\tra{\bone_{d'}}\bx)$, $j=1,\dots,d'$ can be computed for any $\bx \in \R^{d'}$ since the term $f_{S}(v)$ disappears when calculating $\frac{\partial}{\partial x_j}\text{log}\pi(\bx)$.
Therefore, the Gibbs sampler is rarely available but the MALA can be a possible choice if partial derivatives of $f_{\bX}$ are straightforward to compute.
\end{remark}
}

\begin{remark}[MCMC methods for ES contributions]
\changed{The MCMC methods can potentially be applied to estimate other risk contributions such as ES contributions~\eqref{ES contributions}.
For the corresponding target distribution $\pi(\bx)=f_{\bX|S\geq \text{VaR}_p(S)}(\bx), \bx \in \R^d$, the Gibbs sampler is sometimes feasible since its $j$-th full conditional density is given by
$\pi_{j}(x_j|\bx_{-j})=f_{X_j|\{X_j \geq \text{VaR}_p(S)-\tra{\bone_{d-1}}\bx_{-j}, \bX_{-j}=\bx_{-j}\}}(x_j)$ and this can be simulated if $j$-th conditional copula of $\bX$ is available, which is the case for Gaussian, $t$- and (survival) Clayton copulas; see, for example, \citet{cambou2017quasi}.
}

\changed{
While such an MCMC method is applicable to estimate ES contributions, we do not consider it in this paper for several reasons. 
One reason is that, compared with the crude MC, significant computational improvement cannot be expected in general by introducing MCMC.
The crude MC with sample size $N$ still retains $N(1-p)$ samples from $\bX$ to estimate AC$^{\text{ES}}=\E[\bX|S\geq \text{VaR}_p(S)]$ since $\Prob(S\geq \text{VaR}_p(S))=1-p$.
Moreover, from the regulatory viewpoint, a lower confidence level ($p=0.9\text{--}0.95$) of ES is plausible compared with that of VaR.
Therefore, the crude MC method still performs with enough accuracy.
Another reason is that simulation of the Gibbs sampler is typically slow especially when $d$ is large since the Gibbs sampler updates the chain componentwise.
Faster methods are often accessible in the crude MC for large dimensional portfolios.
}
\end{remark}

\section{Numerical experiments}\label{Simulation and empirical studies}
In this section, we apply the MH estimator proposed in section $\ref{the proposed method}$ to various risk models, and compare its performance with other existing estimators of VaR contributions.
Our simulation and empirical study based on real-world data show that the MH estimator has smaller bias and lower MSE compared with other estimators for many situations, including high-dimensional ($d\approx 500$) cases.
Based on the numerical experiments, we provide practical guidelines on how to choose an appropriate proposal distribution of the MH estimator given a risk model.
In the experiments, we used a MacBook Air, 1.4 GHz Intel Core i5, 4 GB 1600 MHz DDR3.

\subsection{Simulation study}\label{subsec: simulation study}

\subsubsection{Description of the numerical comparison}\label{description of the numerical comparison}

In the simulation study, we consider four risk models that are modelled separately by marginal densities and copula density.
We adopt heavy-tailed marginal distributions and copulas with upper-tail dependences as they are often of concern in risk management.
In all risk models, we set the size of the portfolio $d=3$.
The models are set as follows.
\begin{enumerate}\setlength{\itemsep}{-0mm}
\item[(1)] The loss random variables $X_{1}, X_{2}$, and $X_{3}$ follow homogeneous Pareto distributions \eqref{pareto density}
for $\kappa=4$ and $\gamma=3$.
The loss random vector $(X_{1},X_{2},X_{3})$ has a $d$-dimensional survival Clayton copula \eqref{survival clayton copula}
with $\theta=0.5$.\\
\item[(2)] $X_{1},X_{2}$, and $X_{3}$ have the same marginal distributions as in case (1).
Their copula is a student's $t$-copula~\eqref{student t copula density}
with the degree of freedom $\nu=4$, and the dispersion matrix ${\bf P}$ given by
\begin{align}
{\bf P}=\begin{pmatrix}\label{correlation matrix}
1 & -0.5 & 0.3\\
-0.5 & 1 & 0.5\\
0.3 & 0.5 & 1{}
\end{pmatrix}.
\end{align}
\item[(3)] $X_{1},X_{2}$, and $X_{3}$ follow homogeneous student's $t$-distribution with degree of freedom $\nu=4$, location parameter $\mu=0$, and scale parameter $\sigma=1$.
$(X_{1},X_{2},X_{3})$ has a survival Clayton copula \eqref{survival clayton copula} with $\theta=0.5$.\\
\item[(4)] $(X_{1},X_{2},X_{3})$ follows a multivariate student's $t$-distribution~\eqref{multivariate student t density}
 with $\nu=4$, $\bmu=\bzero$, and ${\bf \Sigma}={\bf P}$ where ${\bf P}$ is defined in $(\ref{correlation matrix})$.
\end{enumerate}
The first two models (1) and (2) consider pure losses, and the last two, (3) and (4), consider Profit\&Loss.
In all models, marginal distributions have variances of 2.0 and heavy tails with tail indices $5.0$.
Models (1) and (3) possess homogeneous upper-tail dependences with tail coefficients $\lambda^{U}=0.025$; see \citet[][]{joe2014dependence} for formulas on the tail coefficients.
Models (2) and (4) have upper, lower, and upper-lower tail dependences with tail coefficients $\lambda_{1,2}^{U}=\lambda_{1,2}^{L}=0.012$, $\lambda_{1,3}^{U}=\lambda_{1,3}^{L}=0.162$, $\lambda_{2,3}^{U}=\lambda_{2,3}^{L}=0.253$, $\lambda_{1,2}^{UL}=\lambda_{1,2}^{LU}=0.253$, $\lambda_{1,3}^{UL}=\lambda_{1,3}^{LU}=0.029$, and $\lambda_{2,3}^{UL}=\lambda_{2,3}^{LU}=0.012$.
As inferred by the dispersion matrix $(\ref{correlation matrix})$, the first and second losses are negatively dependent, while other pairs of losses are positively dependent.

For each risk model, we compute several estimators of VaR contributions $\text{AC}=\E[\bX|S=\text{VaR}_{p}(S)]$ for a confidence level $p=0.999$ with the $\text{VaR}_{p}(S)$ replaced by its Monte Carlo estimate $v=S^{[Np]}$. \changed{Note that $p=0.999$ is one of the standard confidence levels of VaR under the Basel capital regulation for various risks; see, for example, \citet{zimper2014minimal} and \citet{bcbs2011operisk}.}
The estimators we compare are the MC estimator $(\ref{MC estimator})$, 
NW estimator $(\ref{nw estimator})$ \citep{tasche2009capital}, GR estimator $(\ref{gr estimator})$ \citep{hallerbach2003decomposing,tasche2004approximations}, and the MH estimator $(\ref{MCMC estimator})$:
\begin{equation*}
\begin{array}{l@{\hspace{10mm}}l}
\estAC{MC}{\delta,N}=
\frac{1}{M_{\delta,N}}\sum_{n=1}^{N} \bX^{(n)}  1_{[S^{(n)}\in A_{\delta}]},&
\estAC{NW}{\phi,h,N}=\frac{
    \sum_{n=1}^{N} \bX^{(n)}  \phi\left(\frac{S^{(n)}-v}{\Delta}\right)
}{
  \sum_{n=1}^{N}\phi\left(\frac{S^{(n)}-v}{\Delta}\right)
},
\\[+10mm]
\estAC{GR}{g_{\bbeta},N}=g_{\hat{\bbeta}_{N}}(v), &
\estAC{MCMC}{q,N}=\frac{1}{N}\sum_{n=1}^{N}\bX^{(n)}_{|S=v},
\end{array}
\end{equation*}
where $\bX^{(1)},\dots,\bX^{(N)} \iidsim F_{\bX}$, $S^{(n)}:=X_{1}^{(n)}+\cdots+X_{d}^{(n)}$ and $(\bX^{(1)}_{|S=v},\dots,\bX^{(N)}_{|S=v})$ is an $N$-path of a Markov chain whose stationary distribution is $F_{\bX|S=v}$.
\changed{Note that the IS estimator~\eqref{IS estimator} is not reported since we could not find instrumental distributions which provide reasonably small variances of the estimators for the risk models above.}

For all estimators, we fix the sample size  $N=10^6$.
Other parameters of the estimators are determined as follows.
First, for the MC estimator, we set $\delta>0$ such that the MC sample size $M_{\delta,N}$ is around $10^3$.
For a fixed $\delta$, asymptotic normality
\begin{equation*}
\sqrt{M_{\delta,N}}\cdot(\estAC{MC}{\delta,N}-\text{AC}_{\delta})\rightarrow {\mathcal N}_{d}(\bzero,{\bf \Sigma}_{\delta}), \quad\text{as}\quad N \rightarrow \infty
\end{equation*}
holds; see, for example, \citet[][]{glasserman2013monte}.
We report the estimate of $\estAC{MC}{\delta,N}$ and its approximated standard error ${\bm S}_{MC}^{(j,j)}/\sqrt{M_{\delta,N}}$ for $j=1,2,3$, where ${\bm S}_{MC}^{(i,j)}$ is the $(i,j)$-component of the sample standard deviation ${\bm S}_{MC}$ defined by
\begin{align*}\label{sample variance matrix over MC samples}
{\bm S}_{MC}=\sqrt{
  \frac{1}{M_{\delta,N}}
  \sum_{n=1}^{N} \left(\bX^{(n)}-\estAC{MC}{\delta,N}\right)\tra{\left(\bX^{(n)}-\estAC{MC}{\delta,N}\right)}  1_{[S^{(n)}\in A_{\delta}]}
}.
\end{align*}
Second, for the NW estimator, we choose the kernel density $\phi$ as the standard normal density.
We decide the bandwidth $\Delta=1.06 \hat{\sigma}_{S}   N^{-1/5}$ according to Silverman's rule of thumb \citep{pagan1999nonparametric}.
Although asymptotic normality holds for the NW estimator, its asymptotic variance can hardly be computed because it requires the evaluation of $f_{S}(v)$.
Therefore, we report only the estimate of $\estAC{NW}{\phi,h,N}$.
Third, for the GR estimator, we choose $g_{\bbeta}(s)=\beta_{0}+\beta_{1} s$, and its coefficients are estimated by
\begin{equation*}\label{GR beta estimator}
\hat{\bbeta}_{N,1}=\frac{\sum_{n=1}^{N}(\bX^{(n)}-\bar{\bX}_{N})(S^{(n)}-\bar{S}_{N})
}{
\sum_{n=1}^{N}(S^{(n)}-\bar{S}_{N})^{2}
},\quad
\hat{\bbeta}_{N,0}=\bar{\bX}_{N}-\hat{\bbeta}_{N,1} \bar{S}_{N},
\end{equation*}
where $\bar{\bX}_{N}=\frac{1}{N}\sum_{n=1}^{N}\bX^{(n)}$ and $\bar{S}_{N}=\frac{1}{N}\sum_{n=1}^{N}S^{(n)}$.
Under regularity conditions, the asymptotic normality
\begin{equation*}
\sqrt{N} \left\{ 
\begin{pmatrix}
\hat{\beta}_{N,0}^{(j)}  \\[5pt]
\hat{\beta}_{N,1}^{(j)}  \\
\end{pmatrix}-
\begin{pmatrix}
\beta_{0}^{(j)}  \\[5pt]
\beta_{1}^{(j)}  \\
\end{pmatrix}
\right\}
\rightarrow N_{2}(\bzero,\sigma_{\ep_{j}}^{2}{\bf Q}^{-1}),
\quad \text{ as } \quad N \rightarrow \infty
\end{equation*}
holds for $j=1,2,3$, where $\hat{\beta}_{N,k}^{(j)}$ and $\beta_{k}^{(j)}$ are the $j$th components of $\hat{\bbeta}_{N,k}$ and $\bbeta_{k}$, respectively, for $k=1$ and $2$; $\ep_{j}$ is the $j$th component of the error term $\bep$; $\sigma_{\ep_{j}}^{2}$ is the conditional variance of $\ep_{j}$ given $S_{(1)},\dots,S_{(N)}$; and 
\begin{equation*}
{\bf Q}:=\lim_{N\rightarrow \infty}\tra{{\bf Y}}{\bf Y}/N, \quad\text{where}\quad
{\bf Y}=\tra{\left(
\begin{array}{lll}
1 & \cdots &1\\
S^{(1)} & \cdots &S^{(N)}
\end{array}
\right)}.
\end{equation*}
Based on these results, we report the estimate of $\estAC{GR}{g_{\bbeta},N}$ and its approximated standard error ${\bf S}_{GR}^{(j)}/\sqrt{N}$ for $j=1,2,3$, where
\begin{equation*}
{\bf S}_{GR}^{(j)}:=\sqrt{\hat{{\bf \Sigma}}_{GR,j}^{(1,1)}+2 v  \hat{{\bf \Sigma}}_{GR,j}^{(1,2)}+v^{2} \hat{{\bf \Sigma}}_{GR,j}^{(2,2)}},
\end{equation*}
$\hat{{\bf \Sigma}}_{GR,j}=\hat{\sigma}_{\ep_{j}}^{2}\cdot(\tra{{\bf Y}}{{\bf \bY}}/N)^{-1}$, and $\hat{\sigma}_{\ep_{j}}$ is the sample standard deviation of the $j$th residuals.
Finally, for the MH estimator, we choose different proposal distributions depending on risk models (1)--(4).
For each risk model, we choose (1) a random walk proposal $q(\bx,\by)=f(\by-\bx)$ with $f \sim \mathcal N_{d}(\bzero,\hat{{\bf \Sigma}}_{v})$, where $\hat{{\bf \Sigma}}_{v}:={\bf S}^{2}_{MC}$; (2) an independent proposal $q(\bx,\by)=f(\by)$, where $f$ is the density of the Dirichlet distribution with parameters \changed{$(0.323, 0.448, 0.892)$, which is estimated by the maximum likelihood method from pseudo samples generated by MC}, (3) and (4) MpCN proposal with $\rho=0.8$, $\bmu=\tra{(\estAC{GR}{g_{\bbeta},N}\text{}',v-\tra{\bone}_{d'}\estAC{GR}{g_{\bbeta},N}\text{}')}$, and ${\bf \Sigma}:={\bf S}^{2}_{MC}$.
In Algorithm 2, we set the initial values as $\bx^{(0)}=\tra{(v/3,v/3,v/3)}$.
We estimate the asymptotic variances of MH estimators by the batch means estimators $\hat{{\bf \Sigma}}_{N}$ defined by $(\ref{batch mean estimator})$.
Following \citet{jones2006fixed}, we choose $L_{N}:=\lfloor N^{\frac{1}{2}} \rfloor=10^3$ and $B_{N}:=\lfloor N/L_{N} \rfloor=\lfloor N^{\frac{1}{2}} \rfloor=10^3$.
We report the estimate of $\estAC{MCMC}{q,N}$ and its approximated standard error $\hat{{\bf \Sigma}}_{N}^{(j,j)}/\sqrt{N}$ for $j=1,2,3$.

\begin{landscape}
\centering
\begin{figure}[h]
\includegraphics[scale=0.85]{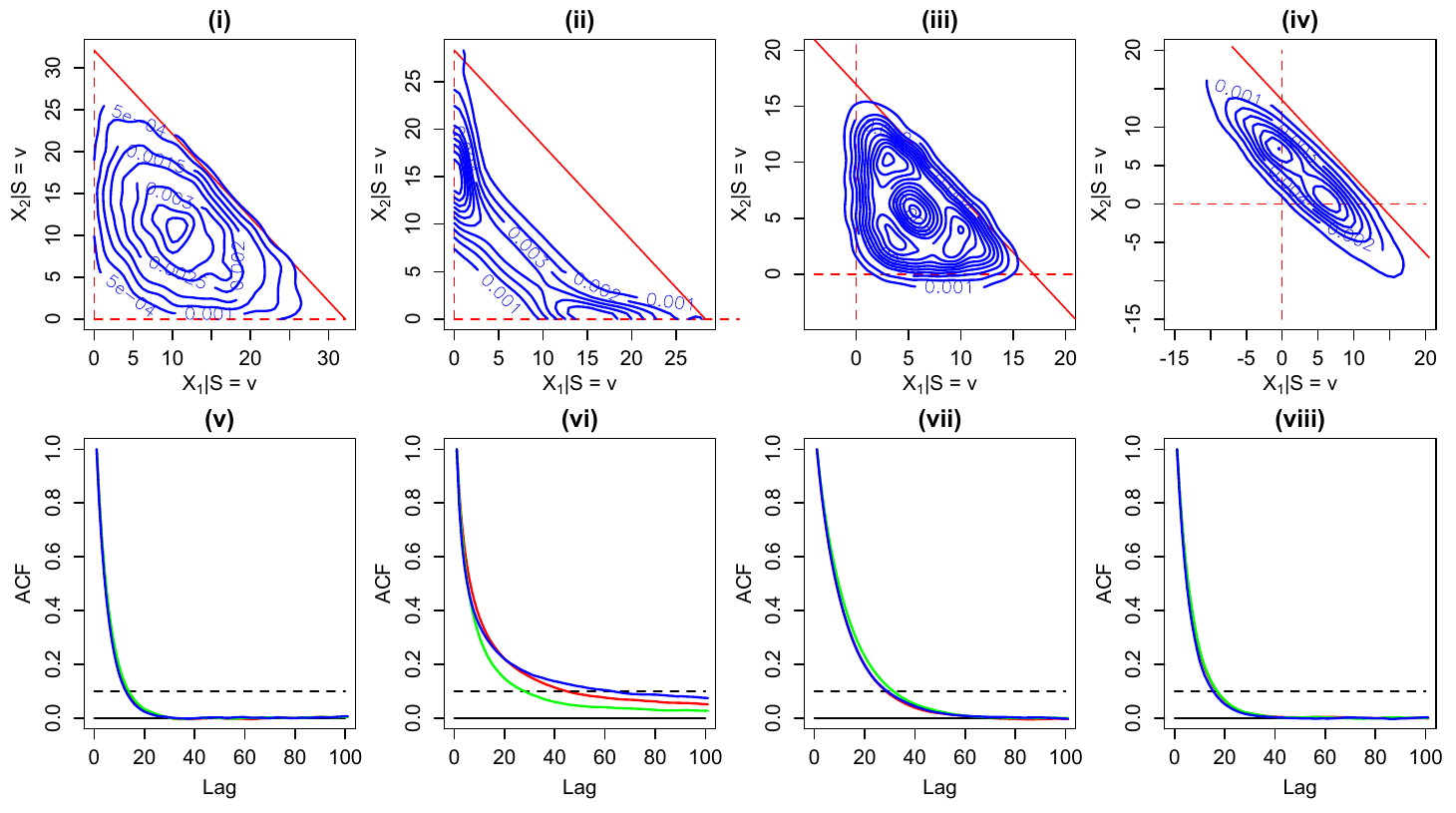}\vspace{0mm}
\caption{Contour plots (i)--(iv) and autocorrelation plots (v)--(viii) of Markov chains generated by Algorithm 2 for four different risk models:
(i) and (v) Pareto + Clayton; 
(ii) and (vi) Pareto + $t$-copula; 
(iii) and (vii) Student's $t$ + Clayton; and 
(vi) and (viii) Student's $t$ + $t$-copula. 
The red lines represent the edges of the $v$-simplex, where $v$ is the estimate of VaR$_{p}(S)$.
The dotted black lines in plots (vi)--(viii) represent $y=0.1$.
When drawing the contour plots, we used subsamples that are picked up every 100th point of the original Markov chains to reduce dependence among samples. 
}
\label{QF_Figure1}
\end{figure}
\end{landscape}

\subsubsection{Results and discussions}\label{results and discussions}
Due to the simplicity of MC, NW, and GR estimators, they were calculated instantly for all risk models.
\changed{On the other hand, MH estimators took (1) 2.324, (2) 1.425, (3) 3.828, and (4) 3.698 minutes to generate an $N$-path and to compute the estimators.}
As mentioned in section $\ref{choice of the proposal distribution}$, the validity of the proposal selection can be inspected by autocorrelation plots and ACR.
 Figure $\ref{QF_Figure1}$ (v)--(viii) shows the autocorrelation plots of the Markov chains generated by Algorithm 2.
The acceptance rate of the MH algorithm in each risk model was (1) 0.566, (2) 0.222, (3) 0.604, and (4) 0.767.
In  Figure $\ref{QF_Figure1}$ (v)--(viii), we can observe that the autocorrelation plots steadily decline below 0.1 by lag $h$ around 100 for all risk models.
Together with the observations that the ACRs are moderate, we could state that the choices of the proposal distributions above are appropriate for all risk models.

Before showing the results of the estimation, let us check the shapes of the conditional distributions $F_{\bX'|S=v}$ by plotting the $N$-path generated by Algorithm 2.
Figure $\ref{QF_Figure1}$ (i)--(iv) shows the contour plots of the generated Markov chains.
According to these plots, the features of the conditional distribution $F_{\bX'|S=v}$ in each risk model are summarized as follows:
\begin{enumerate}\setlength{\itemsep}{+1mm}
\item[(1)] {\it Pareto + survival Clayton:} The contour plot in Figure $\ref{QF_Figure1}$ (i) shows that $F_{\bX'|S=v}$ has a unique mode.
The density steadily decays as they move away from the mode.
In addition, the contour plot seems symmetric at the diagonal line.

\item[(2)]  {\it Pareto + $t$-copula:} Unlike case (1), $F_{\bX'|S=v}$ seems to possess two distinct modes close to the axes.
High probabilities are concentrated around the edges of the simplex.
Along with the axes, the gradients of the density seem to be sharp.
Moreover, the contour plot in Figure $\ref{QF_Figure1}$ (ii) is asymmetric at the diagonal line $y=x$. 

\item[(3)]  {\it Student's $t$ + survival Clayton:} Although the conditional loss random vector $\bX'|S=v$ can take negative values, it is supported mostly on the bounded simplex as in case (1).
The contour plot in Figure $\ref{QF_Figure1}$ (iii) seems unimodal and symmetric around the diagonal.
The tails of $F_{\bX'|S=v}$ are obviously light.

\item[(4)]  {\it Student's $t$ + $t$-copula:} In this case, the conditional distribution $F_{\bX'|S=v}$ can be shown to be a \emph{Pearson type VII distribution} $(\ref{pearson type vii})$ in Appendix~\ref{sec: consistency and asymptotic normality}.
From the contour plot in Figure $\ref{QF_Figure1}$ (iv), we can observe elliptical symmetry and tail-heaviness.
Unlike case (3), the loss vector $\bX'|S=v$ can take large negative values beyond the bounded simplex.
\end{enumerate}

\begin{table}[t]\centering
\begin{threeparttable}
\caption{Estimates (biases) and standard errors (rooted mean squared errors; RMSEs) of the four different estimators of value-at-risk contributions under four different risk models\tnote{$\dagger$}. \changed{The best result in each risk model is highlighted in bold.} }\vspace{-2mm}
\label{Table 1}\vspace{-2mm}
\begin{tabular}{l@{\hspace{0mm}}r@{\hspace{4mm}}r@{\hspace{4mm}}r@{\hspace{4mm}}r@{\hspace{4mm}}r@{\hspace{0mm}} c@{\hspace{6mm}} r@{\hspace{4mm}}r@{\hspace{4mm}}r} \toprule[1.5pt] \\[-7pt]
&& \multicolumn{4}{c}{{\it Estimate of AC (Bias): }} &  &
  \multicolumn{3}{c}{{\it Standard error ($\sqrt{MSE}$): }}  \\[6pt] \cline{3-6} \cline{8-10} \\[-4pt]
&{\it Estimator} &  $\bm{MC}$ & $\bm{NW}$ & $\bm{GR}$ & $\bm{MH}$ && $\bm{MC}$ & $\bm{GR}$ & $\bm{MH}$ \\[6pt]
  \hline \\[-2pt]
{\it (1)}&\multicolumn{9}{l}{{\it Pareto + survival Clayton: True AC =  (10.708, 10.708, 10.708)} }\\[5pt]
&AC$_{1}$ &10.575 & 11.744 & 10.745 & \bf{10.708} && 0.173 & \bf{0.008 }& 0.019 \\[2pt] 
&   & (-0.133) & (1.036) & (0.037) & \bf{(0.000)} && (0.218) & (0.038) & \bf{(0.019)} \\[2pt] 
& AC$_{2}$  & 10.138 & 10.547 & 10.635 & \bf{10.724} && 0.169 & \bf{0.008} & 0.020 \\[2pt] 
&   & (-0.571) & (-0.161) & (-0.074) & \bf{(0.016)} && (0.595) & (0.074) & \bf{(0.025)} \\[2pt] 
& AC$_{3}$  & 10.389 & 9.813 & 10.745 & \bf{10.693} && 0.178 & \bf{0.008} & 0.018\\[2pt]
&   &  (-0.320) & (-0.896) & (0.037) & \bf{(-0.016)} && (0.366) & (0.038) & \bf{(0.024)}  \\ \\[-2pt]
{\it (2)}&\multicolumn{9}{l}{{\it Pareto + $t$-copula: \changed{True AC = ( 7.198, 8.908, 12.206)} } }\\[5pt]
& AC$_{1}$  & 6.835 & 8.162 & 7.697 & \bf{7.339} && 0.238 & \bf{0.010} & 0.041 \\[2pt]
&   &  (-0.362) & (0.964) & (0.499) &  \bf{(-0.121)} && (0.433) & (0.499) & \bf{(0.132)} \\[2pt] 
& AC$_{2}$ & 8.785 & 8.355 & 8.740 & \bf{8.765} && 0.223 & \bf{0.010} & 0.028\\[2pt] 
&   &   (-0.122) & (-0.553) & (-0.167) &  \bf{(-0.023)} && (0.255) & (0.168) & \bf{(0.046)} \\[2pt] 
& AC$_{3}$ & 11.913 & 11.781 & 11.875 & \bf{12.208} && 0.134 & \bf{0.006} & 0.024 \\[2pt] 
& & (-0.293) & (-0.426) & (-0.332) & \bf{(0.144)} && (0.322) & (0.332) & \bf{(0.148)} \\ \\[-2pt]
{\it (3)}&\multicolumn{9}{l}{{\it Student's $t$ + survival Clayton: True AC =  (5.647, 5.647, 5.647)} }\\[5pt]
& AC$_{1}$ & 5.592 & 5.693 & \bf{5.662} & 5.617 && 0.081 & \bf{0.006} & 0.018  \\[2pt] 
&   & (-0.055) & (0.046) & \bf{(0.015)} & (-0.029) && (0.098) & \bf{(0.016)} & (0.034) \\[2pt] 
&AC$_{2}$ & 5.410 & 5.722 & \bf{5.642} & 5.665 && 0.079 & \bf{0.006} & 0.019 \\[2pt] 
&   &  (-0.236) & (0.076) & \bf{(-0.005)} & (0.018) && (0.249) &\bf{(0.007)} & (0.026) \\[2pt] 
&AC$_{3}$ & 5.473 & 5.517 & \bf{5.636} & 5.658 && 0.082 & \bf{0.006} & 0.018 \\[2pt] 
&   &  (-0.173) & (-0.130) & \bf{(-0.011)} & \bf{(0.011)} && (0.192) & \bf{(0.012)} & (0.021) \\ \\[-2pt]
{\it (4)}&\multicolumn{9}{l}{{\it  Student's $t$ + $t$-copula: True AC =  (2.996, 3.745, 6.741)} }\\[5pt]
& AC$_{1}$ & 2.821 & 3.065 & \bf{2.997} & 2.940 && 0.117 & \bf{0.007} & 0.036 \\[2pt] 
&   & (-0.176) & (0.069) & \bf{(0.001)} & (-0.056) && (0.211) & \bf{(0.007)} & (0.067) \\[2pt] 
& AC$_{2}$ & 3.772 & 3.560 & \bf{3.742} & 3.792 && 0.109 & \bf{0.006} & 0.033  \\[2pt] 
&   &  (0.027) & (-0.185) & \bf{(-0.004)} & (0.047) && (0.112) & \bf{(0.007)} & (0.057) \\[2pt] 
& AC$_{3}$ & 6.564 & 6.852 & \bf{6.745} & 6.751 && 0.043 & \bf{0.002} & 0.011 \\[2pt] 
&   &  (-0.178) & (0.110) & \bf{(0.003)} & (0.010) && (0.183) & \bf{(0.004)} & (0.015)  \\ \\[-2pt]
   \bottomrule[1.5pt]
\end{tabular}
\begin{tablenotes}
\item[$\dagger$] {\footnotesize The estimate is computed for the Monte Carlo $\bm{MC}$, Nadaraya-Watson $\bm{NW}$, generalized regression $\bm{GR}$, and Metropolis-Hastings $\bm{MH}$ estimators.
The standard error is computed except for the NW estimator.
The sample size is $N=10^6$ for all methods.
}
\end{tablenotes}
\end{threeparttable}
\end{table}

The results of estimation are summarized in Table $\ref{Table 1}$.
In the four different risk models (1)--(4), we report the estimates, their approximated standard errors, biases, and rooted MSEs (RMSEs) of the four different estimators: MC, NW, GR, and MH.

In the first risk model, true VaR contributions are obtained by equally allocating the total VaR since the marginal distributions are homogeneous and the copula is exchangeable.
We observed that the MC and NW estimators have relatively larger biases than those of others.
Compared with the MH estimator, the GR estimator still suffers from some inevitable bias although its standard error is quite small.
The MC estimator has a relatively large standard error due to sample inefficiency.
Overall, the MH estimator outperforms all other estimators in terms of RMSE.

The second risk model does not allow us to analytically calculate the true VaR contributions. 
Therefore, the true VaR contributions are computed by Monte Carlo numerical integration, which still works with enough accuracy for dimension three.
We can observe that existing estimators suffer from biases possibly caused by asymmetry and multi-modality of the conditional distribution $F_{\bX'|S=v}$.
In particular, the GR estimator has relatively large bias and RMSE in contrast to the good performance in the first risk model.
On the other hand, the MH estimator maintains lower bias and RMSE compared with the other estimators.

In the third risk model, the true VaR contributions are given by the equal allocation based on the same discussion as in case (1).
Thanks to the symmetry and unimodality of the conditional distribution $F_{\bX'|S=v}$, all estimators retain small biases and RMSEs.
Together with the results in cases (1) and (2), one can state that the GR estimator performs well so long as $F_{\bX'|S=v}$ is symmetric and unimodal.
Additionally, the MH estimator reduces bias and RMSE compared with those of MC and NW estimators.

The final risk model provides the true VaR contributions via the formula $(\ref{VaR contributions elliptical case})$.
In such an elliptical case, the GR estimator provides quite an accurate estimate.
Although the conditional distribution $F_{\bX'|S=v}$ is heavy-tailed as seen in  Figure $\ref{QF_Figure1}$ (iv), the MH estimator retains high performance compared with the MC and NW estimators.
The bias of the MH estimator is significantly improved compared with the MC and NW estimators.
Moreover, its standard error and RMSE are lower than those of the MC estimator. 

Throughout the numerical study, the MH estimator provided small bias and RMSE regardless of the shape of the conditional distribution $F_{\bX'|S=v}$.
In the case when $F_{\bX'|S=v}$ is unimodal and symmetric, the GR estimator also guarantees a good performance.
On the other hand, at least in our numerical experiment, the MC and NW estimators had relatively larger biases and RMSEs compared with other estimators.

\subsection{Empirical study}\label{subsec: empirical study}

The numerical study is now extended to a high-dimensional case with real-world data.
We used the dataset \textsf{stockdata} in \textsf{R}-package \textsf{huge}, which consists of stock market data of closing prices from all stocks in the S\&P 500 for all the days the market was open in the period of January 1, 2003 to January 1, 2008 (five years).
During the time period, there remained $d=452$ stocks in the S\&P 500.
The sample size is $T=1258$. 
We transformed the data into the log-ratio of the price at time $t$ to the price at time $t-1$.

Most stylized facts on stock returns listed in Chapter 3 of \citet{mcneil2015quantitative} are observable in the data.
For example, return series are unimodal, leptokurtic, and heavy-tailed with little serial correlation and volatility clusters.
Moreover, the $d$ return series are mutually dependent.
Taking these observations into account, we adopted a copula-GARCH model with skew-$t$ white noise \citep[ST-GARCH; see, for example,][]{jondeau2006copula,huang2009estimating}.
In the model, $d$ marginal time series are modelled by GARCH$(1,1)$ and the underlying white noise processes follow skew-$t$ distributions with an inhomogeneous degree of freedom $\nu_j>0$ and skewness parameter $\gamma_j>0$; that is, within a fixed time period $\{1,\dots,T\}$ the $j$-th return series $(X_{1,j},\dots,X_{T,j})$ follows 
\begin{align*}
X_{t,j}=\mu_j + \sigma_{t,j}Z_{t,j},\quad \sigma_{t,j}^2 = \omega_{j}+\alpha_{j}X_{t-1,j}^2+\beta_{j}\sigma_{t-1,j}^2,\quad Z_{t,j}\iidsim \text{ST}(\nu_j,\gamma_j)
\end{align*}
for $t=2,\dots,T,\quad  j=1,\dots,d$, where $\omega_{j} >0,\alpha_{j},\beta_{j}\geq 0$, $\alpha_{j}+\beta_{j}<1$, and $Z_{t,j}$ follows a skew-$t$ distribution $\text{ST}(\nu_j,\gamma_j)$ with density given by
\begin{align}\label{density of skew t distribution}
f_{j}(x_j; \nu_j,\gamma_j)=\frac{2}{\gamma+\frac{1}{\gamma}}\left\{t(x_j,\nu_j)1_{[x_j\geq 0]} + t(\gamma_j x_j,\nu_j)1_{[x_j< 0]}\right\}
\end{align}
where $t(x,\nu)$ is a probability density function of a student's $t$-distribution with degree of freedom $\nu>0$ and a skewness parameter $\gamma >0$ with $\gamma=1$ symmetric; see \citet{fernandez1998bayesian} for more detail. 
The copula among $\bZ_t = (Z_{t,1},\dots,Z_{t,d})$ is assumed to be a student's $t$-copula with parameters $\nu$ and ${\bf P}$ independent of time $t$.

We estimated parameters of the ST-GARCH(1,1) model with the $t$-copula based on the copula approach.
First, we fitted the ST-GARCH(1,1) model with the maximum likelihood method to the marginal time series.
Then, to obtain pseudo-samples from the copula of $\bZ$, distributional transform was applied to the $d$-dimensional white noise process extracted from the ST-GARCH model.
We finally fit the $t$-copula to them with method-of-moments using Kendall's tau for the dispersion matrix ${\bf P}$ and the maximum likelihood method for the degree of freedom $\nu$; see \citet{demarta2005t} for more detail.
The results of the estimation are summarized in Figure~\ref{Boxplots:parameters} with each boxplot representing $d$ numbers of each parameter.
From (B1) and (B5), the estimates of means and omegas are almost $0$. 
From (B3), most of the marginal white noise distributions are symmetrical but some are skewed.
From (B4), their degrees of freedom range from two to ten, that is, the tail-heaviness of the return series is inhomogeneous over $d$ assets.
Finally, (B8) shows that the pairwise correlations among the return series are typically from 
0.2 to 0.4, and some have strong positive correlations.

\begin{figure}[t]
\center
\includegraphics[scale=0.6]{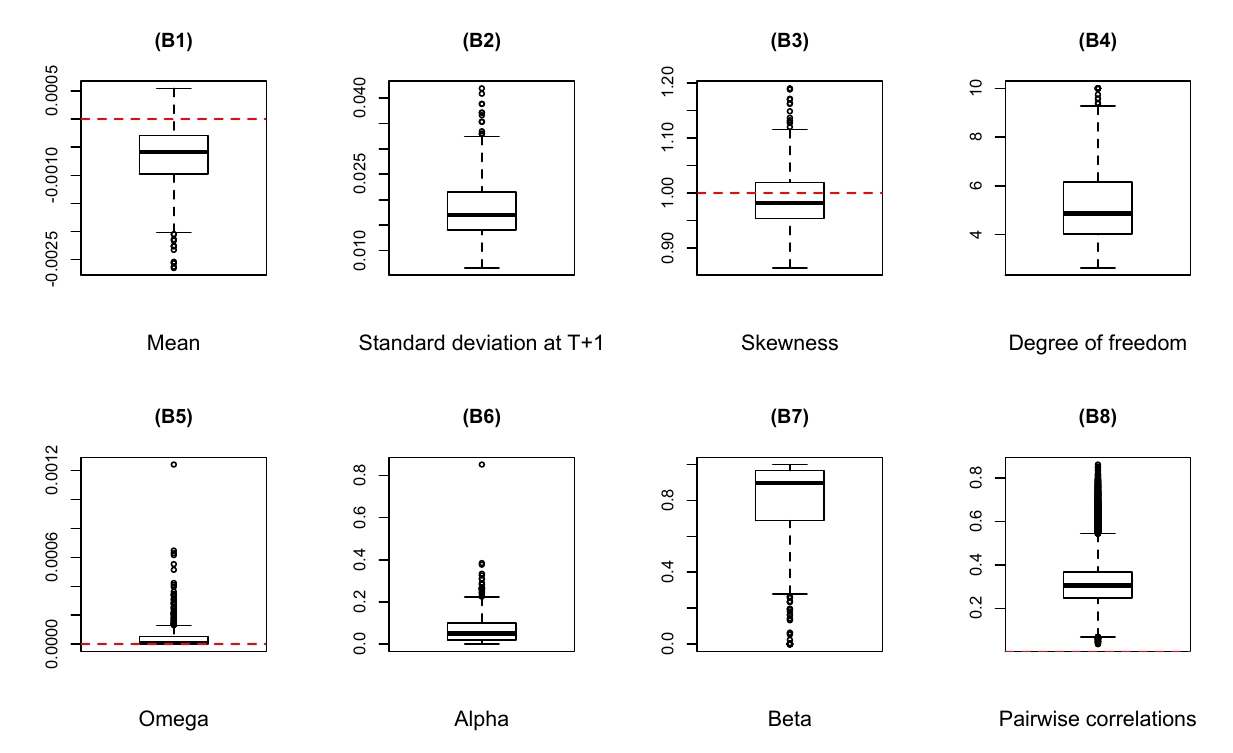}
\caption{Boxplots of $d$ estimates of each parameter 
(B1) $\mu_j$, (B2) $\sigma_{T+1,j}$, (B3) $\gamma_j$, (B4) $\nu_j$, (B5) $\omega_j$, (B6) $\alpha_j$, (B7) $\beta_j$, and (B8) $\rho_{j_1,j_2}$ for $j=1,\dots,d$ and $j_1,j_2 \in \{1,\dots,d\}$ of ST-GARCH(1,1) model
$X_{t,j}=\mu_j + \sigma_{t,j}Z_{t,j},\ \sigma_{t,j}^2 = \omega_{j}+\alpha_{j}X_{t-1,j}^2+\beta_{j}\sigma_{t-1,j}^2,\ Z_{t,j}\iidsim \text{ST}(\nu_j,\gamma_j)$, $t=1,\dots,T+1$, $j=1,\dots,d$ with a $t$-copula with parameters $\nu$ and ${\bf P}$. 
The estimate of the degree of freedom of the $t$-copula was $\nu=89.039$.}
\label{Boxplots:parameters}
\end{figure}

Our goal in this study is to compute the conditional VaR contributions at time $T+1$ given the history $\mathcal F_{t}$.
Under the model described above, the marginal distribution of the $j$-th return at time $T+1$ is $X_{T+1,j|\mathcal F_{T}}\sim$ST$(\mu_j,\sigma_{t+1,j}^2,\nu_j,\gamma_j)$, where ST$(\mu_j,\sigma_{t+1,j}^2,\nu_j,\gamma_j)$ is a skew $t$-distribution with density $f_{j}(\frac{x_j-\mu_j}{\sigma_{t+1,j}}; \nu_j,\gamma_j)$ with $f_{j}(\cdot;\nu_j,\gamma_j)$ defined in \eqref{density of skew t distribution}.
Their copula is a student's $t$-copula with parameters $\nu$ and $P$.
Based on this multivariate model, conditional VaR contributions at time $T+1$ given histories $\mathcal F_t$ are estimated by the same procedure as in section~\ref{subsec: simulation study}.

We estimated the conditional VaR contributions (AC$_1^{T+1},\dots,$AC$_d^{T+1}$) with confidence level $p=0.999$ by using MC, NW, GR, and MH methods.
In MC, $N=10^5$ samples were generated and the total VaR was estimated as the $Np$-th largest sample among them.
The run time of the MC simulation was $2.690$ minutes.
The MC estimates of VaR contributions were then computed as sample means of the conditional samples whose sums fall into the set $A_{\delta}=[v-\delta,v+\delta]$.
The bandwidth was set to be $\delta = 4.8$ so that there were $M_{\delta,N} = 733$ conditional MC samples.
Estimates of standard errors were also computed based on these samples.
NW, GR, and MH estimators were computed analogously to the previous simulation study in section~\ref{subsec: simulation study}.
For the MH estimator, the MpCN proposal distribution was chosen since the target distribution was expected to be heavy-tailed and elliptical to some extent.
The length of the sample path was chosen to be $N=10^4$, and the run time of the MH algorithm was $5.487$ minutes.
We inspected the autocorrelation plots and ACR to check the validity of the proposal distribution.
We observed that all autocorrelations decreased below $0.1$ if lags are larger than $40$.
Together with the ACR $0.983$, we concluded that the choice of $q$ was appropriate.

Figure~\ref{estimatedACsPlots} shows the MC, NW, GR, and the MH estimates of the conditional VaR contributions (AC$_1^{T+1},\dots,$ AC$_{d}^{T+1}$) of returns at time $T+1$ given histories $\mathcal F_{T}$ plotted with the homogeneously allocated capitals VaR$_{p}(S|\mathcal F_{T})/d$ and the standardized marginal VaRs defined by $\text{VaR}_{p}(X_{T+1,j}|\mathcal F_{T})\Delta_{p}(\bX_{T+1}|\mathcal F_{T})$, where $\Delta_{p}(\bX_{T+1}|\mathcal F_{T})$ is the so-called \emph{superadditivity ratio} defined by
\begin{align*}\label{eq:standardized VaR}
\Delta_{p}(\bX_{T+1}|\mathcal F_{T})=\frac{\text{VaR}_{p}(S|\mathcal F_{T})}{\sum_{j=1}^d \text{VaR}_{p}(X_{T+1,j}|\mathcal F_{T})}.
\end{align*}
For MC, GR, and MH estimators, the 95\% confidence upper and lower bounds are also plotted.
On the x-axis, the 452 assets are rearranged in increasing order of MH estimators.

\begin{figure}[t]
\center
\includegraphics[scale=0.6]{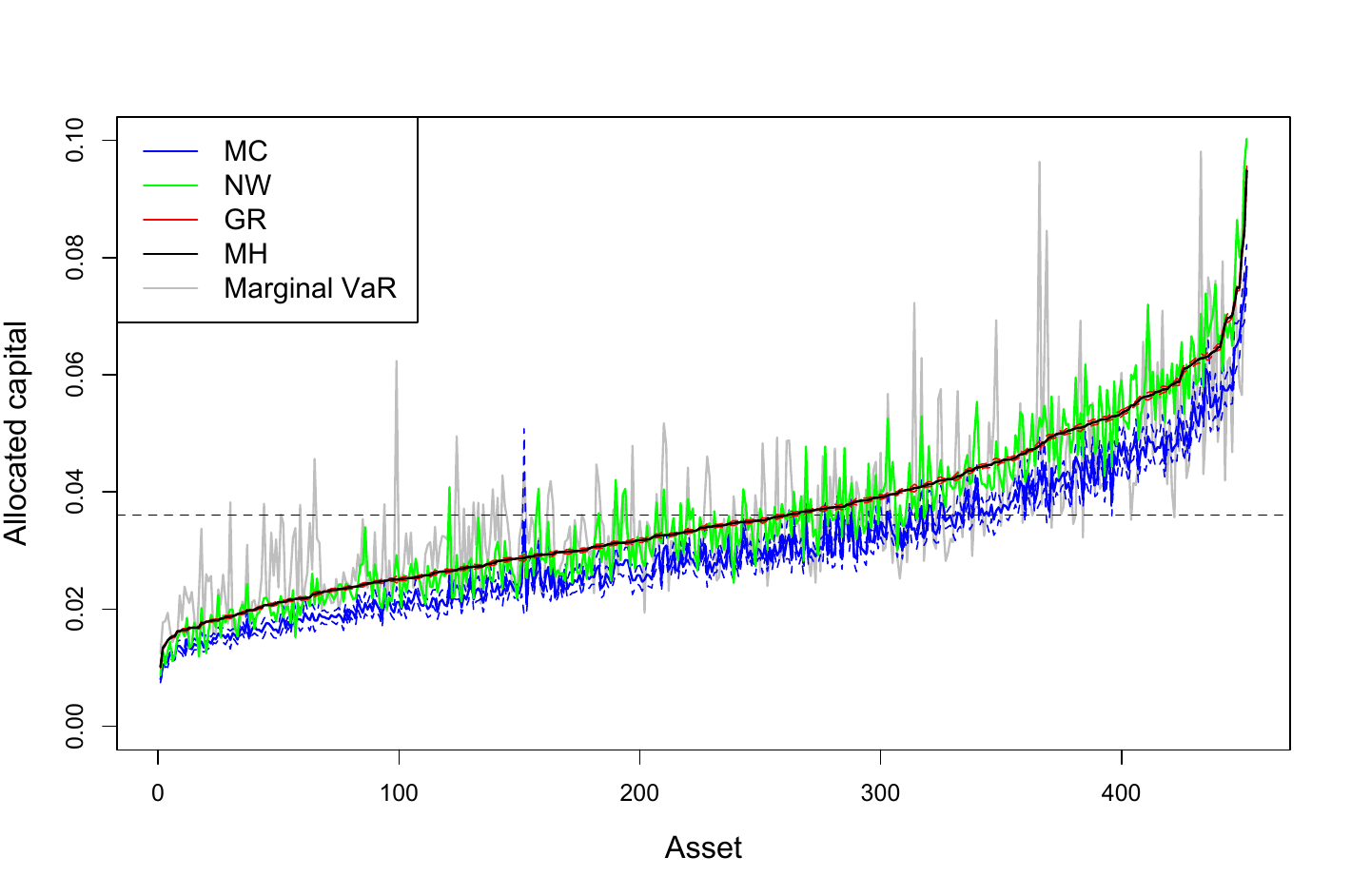}
\caption{
Monte Carlo (MC; blue), Nadaraya-Watson (NW; green), Generalized regression (GR; red), and Metropolis-Hastings (MH; black) estimators of conditional VaR contributions at time $T+1$ given $\mathcal F_T$ plotted with standardized marginal value-at-risks (gray) and the homogeneously allocated capitals (dotted black line).
The colored dotted lines represent the 95\% confidence upper or lower bounds of MC, GR, or MH estimators.
}
\label{estimatedACsPlots}
\end{figure}

Compared with the dotted line representing the homogeneous allocation, all the estimated allocated capitals show inhomogeneity among assets.
Overall, the estimated VaR contributions are less volatile than the standardized marginal VaRs, which implies the benefit of the diversification effect.
We can also observe that the MH estimates and GR estimates almost coincide for all $d$ assets.
The confidence intervals of both estimators are much tighter than that of the MC estimator.
NW estimates fluctuate around the line of the MH and GR estimates.
On the other hand, the MC estimates deviate from these lines, which indicates that the MC estimators contain inevitable biases.
In summary, although the true ACs are unknown, the GR and MH estimators retain stable performance compared with the MC and NW estimators even if the dimension $d$ is large and marginal distributions are inhomogeneous.

\subsection{Advantages and disadvantages of the MH estimator}
We summarize the advantages and disadvantages of the MH estimator compared with the other estimators.
The first advantage is that the MH estimator is consistent whereas this is not always true for the other estimators.
As explained in section $\ref{capital allocation problem}$, the MC, NW, and GR estimators have biases which cannot be easily eliminated.
In fact, we observed in Table $\ref{Table 1}$ that unignorable biases of the MC, NW, and GR estimators sometimes remain even when their standard errors are sufficiently small.
In contrast, the MH estimator provides more accurate estimates of VaR contributions as $N \rightarrow \infty$ due to its consistency.
Since CLT also holds, the confidence interval of the true VaR contributions is also available.
Secondly, the MH estimator has great sample efficiency compared with the MC estimator.
While samples are generated from $F_{\bX}$ and most are discarded in the MC method, no samples are wasted in the MH method since it directly simulates $F_{\bX|S=v}$.
Consequently, the MH estimator can achieve low standard errors.
Finally, the MH estimator can maintain high performance even when the conditional distribution $F_{\bX'|S=v}$ is multimodal or heavy-tailed.
As discussed in subsection~\ref{results and discussions}, the performance of the GR estimator highly depends on the shape of $F_{\bX'|S=v}$.
On the other hand, for the MH estimator, the shape of $F_{\bX'|S=v}$ can be directly captured through the proposal distribution $q$.
By choosing an appropriate proposal distribution $q$ according to the shape of $F_{\bX'|S=v}$, the MH estimator can attain great performance.
This advantage, however, can be seen as a disadvantage from the viewpoint of the simplicity of estimation.
In general, estimation with MH requires two steps: first is to choose a family of proposal distribution, and the second is to determine its parameters.
The second step of parameter estimation can be based on the MC samples falling into set $A_{\delta}$, which is regarded as the {\it pseudo} samples from $F_{\bX|S=v}$.
Meanwhile, the first step is not so straightforward.
We will discuss this issue in the next subsection $\ref{Empirical study of proposal distributions}$.
Another disadvantage of the MH estimator is that it typically requires a longer run time than other existing estimators. 
Since MH requires $N$ times of simulating the proposal distribution and evaluating the acceptance probability $\eqref{acceptance probability}$, careful programming and proposal selection are necessary to save computational time.

\subsection{Guidelines for the choice of proposal distribution}\label{Empirical study of proposal distributions}
A significant drawback of the MH estimator is that the choice of an appropriate proposal distribution $q$ is not as simple as the parameter selections of other existing estimators. 
An instruction for proposal selection is necessary since it highly affects the performance of the MH.
In this subsection, we first investigate the symptoms caused by an inappropriate choice of $q$.
Then, we consider how to overcome these problems based on the numerical experiments provided above.
Practical guidelines for choosing an appropriate proposal distribution are also provided.

An inappropriate choice of $q$ is largely classified into two cases.
One is that proposal distribution $q$ often generates a candidate of which the probability measured by $\pi$ is quite small.
This case occurs, for example, when $q$ does not fully capture the shape of $\pi$.
In such a case, the Markov chain moves quite slowly and this yields a high asymptotic standard error of the MH estimator.
This symptom appears as quite a low acceptance rate and high autocorrelations.
Another case is wherein $q$ generates only some parts of the whole support of $\pi$.
This case occurs, for example, when $\pi$ has distinct local modes and the variance of $q$ is so small that the chain cannot pass between ridges.
In such a case, an estimate can be significantly biased, although the acceptance rate and autocorrelation plots are seemingly perfect.
This symptom appears as a distorted plot of MCMC samples whose shapes are completely different from the target distribution $\pi$.

How can we detect and avoid such fallacious estimates?
First, as mentioned in section $\ref{choice of the proposal distribution}$, it is indispensable to inspect the autocorrelation plots and ACR to prevent the first symptom.
Additionally, to avoid the second symptom, we recommend drawing the plots of the generated Markov chain and comparing them with the plots of the MC samples whose componentwise sums belong to $A_{\delta}=[v-\delta,v+\delta]$.
Since such MC samples follow the distribution $F_{\bX|S\in A_{\delta}}$, one can detect the distortion of the generated Markov chain by comparing the two scatter plots of $F_{\bX|S=v}$ and $F_{\bX|S\in A_{\delta}}$.
As an example from our simulation study in subsection~\ref{subsec: simulation study}, Figure $\ref{QF_Figure2}$ shows the scatter plots of the MC samples whose sums belong to $[v-\delta,v+\delta]$ overlaid on the scatter plots of the MH samples.
In the figure, we can check that the shapes of the scatter plots of the MH samples bear striking resemblance to those of the MC samples for all risk models.
If some part of the support of $\pi$ is covered by the MC samples but not by the MH samples, the choice of $q$ is questionable.

\begin{figure}[t]
\center
\vspace{-10mm}
\includegraphics[scale=0.5]{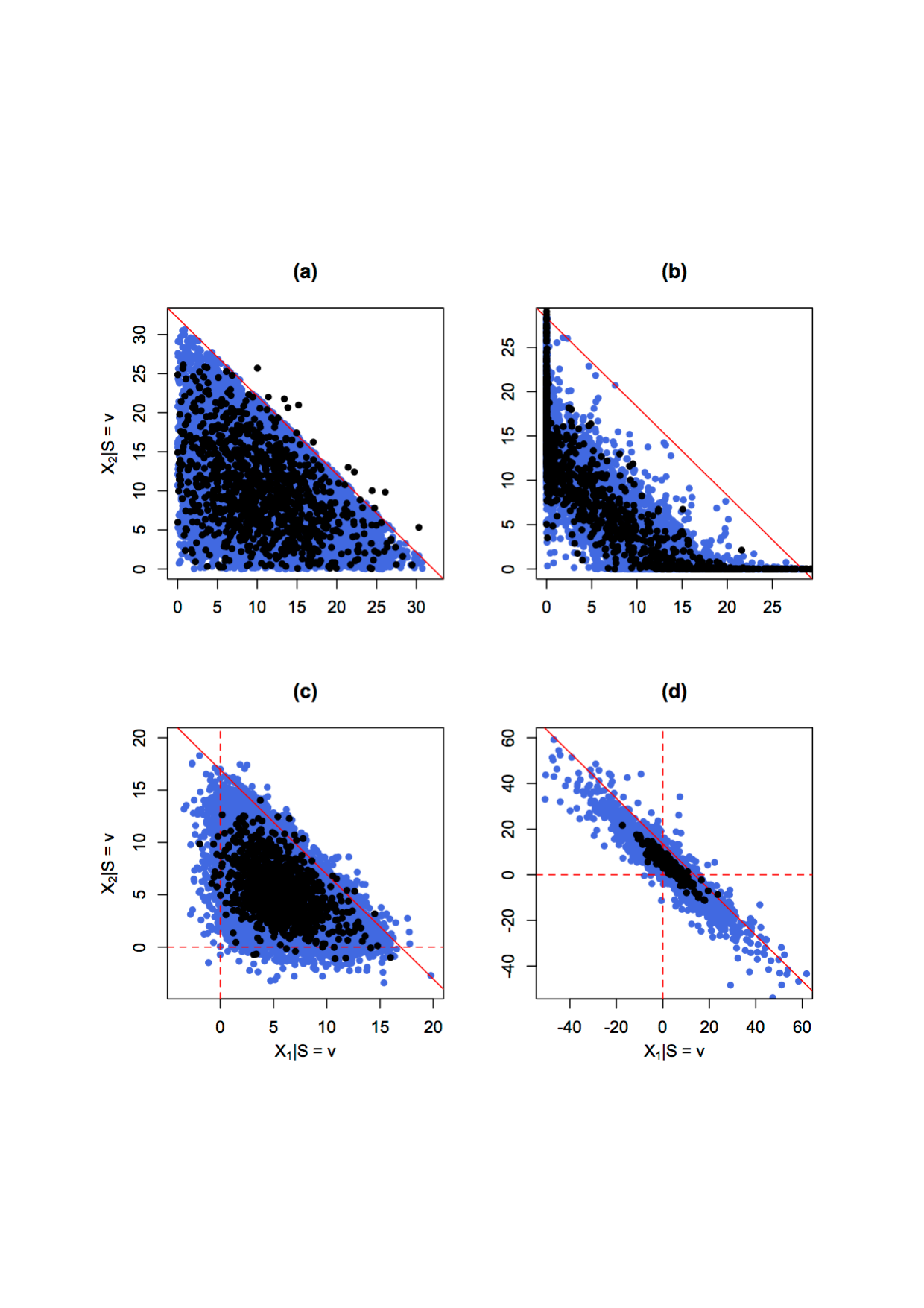}
\vspace{-25mm}
\caption{Scatter plots of the Monte Carlo (MC; black) and Metropolis-Hastings (MH; blue) samples for different risk models: (a) Pareto + survival Clayton, (b) Pareto + $t$-copula, (c) Student's $t$ + survival Clayton, (d) Student's $t$ + $t$-copula.
The red lines represent the edges of the $v$-simplex, where $v$ is the estimate of VaR$_{p}(S)$.
We plot the MC samples generated from $F_{\bX}$ such that their sums belong to $A_{\delta}=[v-\delta,v+\delta]$.
In the four risk models, the values of $\delta$ are (1) 4.8, (2) 3.9, (3) 2.2, and (4) 1.7.
When drawing the scatter plots of the MH samples, we used subsamples that are picked up every 100th point among the original Markov chains.
}
\label{QF_Figure2}
\end{figure}

Finally, through numerical experiments we found that dependence information of the underlying risk model can be helpful for the selection of $q$.
When copula $C$ of the underlying risk model only has positive dependences for all pairs of loss variables, then the conditional distribution $F_{\bX|S=v}$ is likely to be unimodal and light-tailed since positive dependence among $\sqc{X}{d}$ prevents them from being diversified under the constraint $\{X_{1}+\cdots+X_d = v\}$.
In risk models (1) and (3) in subsection~\ref{description of the numerical comparison}, where copula $C$ has only positive dependences, the contour plots in Figures $\ref{QF_Figure1}$ (i) and (iii) show that $F_{\bX'|S=v}$ is unimodal and light-tailed.
These features facilitate the estimation with MH since simple proposal distributions such as the random walk proposal ($\ref{random walk proposal}$) and independent proposal  $(\ref{independent proposal})$ can perform well.
Conversely, when copula $C$ has negative dependence, $F_{\bX'|S=v}$ tends to be multimodal or heavy-tailed since negative dependence allows each component of $\bX$ to take extreme values under $\{X_1+\cdots+X_d = v\}$.
In risk models (2) and (4) in subsection~\ref{description of the numerical comparison}, where copula $C$ has negative dependences, Figure $\ref{QF_Figure1}$ (ii) indicates that $F_{\bX'|S=v}$  is bimodal, and the contour plot in Figure $\ref{QF_Figure2}$ (d) shows that $F_{\bX'|S=v}$ is heavy-tailed.
In such cases, careful proposal selection is required for achieving an efficient MH estimator. 
When the losses $\sqc{X}{d}$ are all nonnegative, then $F_{\bX'|S=v}$ is supported on the bounded simplex $\mathcal S_{v}$ defined in \eqref{v simplex}.
Therefore, one can cover the whole support of $F_{\bX'|S=v}$ by choosing $q$ as the independent proposal with the distribution defined on the simplex.
Uniform distribution on $\mathcal S_{v}$ can be the safest choice.
It is also possible to choose other distributions that share the same features of $F_{\bX'|S=v}$ observed in the MC samples. 
For instance, since bimodality is observed in the contour plot in Figure $\ref{QF_Figure2}$ (b), we choose $q$ as the independent proposal distribution with $f$ the Dirichlet distribution on $\mathcal S_{v}$, which can possess two distinct modes around the edges of the simplex.
When $\bX$ is $\R^{d}$-valued and negatively dependent, an efficient MCMC is challenging since the target distribution $F_{\bX'|S=v}$ is likely to be multimodal or heavy-tailed.
As a special case, when $F_{\bX}$ is elliptical to some extent, then $F_{\bX'|S=v}$ is likely to be elliptical again.
In such a case, even if it is heavy-tailed, the MpCN proposal distribution $(\ref{MpCN proposal})$ is known to perform well, which is also demonstrated by the simulation study of the risk model (4) in subsection~\ref{description of the numerical comparison} and by the empirical study in subsection~\ref{subsec: empirical study}.

The discussions on choosing an appropriate proposal distribution are summarized as a flowchart in Figure $\ref{flowchart of the choice of the proposals}$.
Together with the guidelines, the whole procedure of our MH estimator of VaR contributions presented in this paper is summarized as follows.

\begin{flushleft}
{\bf Algorithm 3:} (Estimation of VaR contributions with MCMC)
\end{flushleft}
\begin{enumerate}\setlength{\itemsep}{-5pt}
 \item[1.] Generate $\sqc{\bX}{M} \iidsim F_{\bX}$ by MC.\\
 \item[2.] Based on the samples generated in step 1, estimate VaR by $v=\widehat{\text{VaR}}_{p}(S)$.\\\item[3.] For a bandwidth $\delta>0$, extract subsamples such that $\tra{\bone}_{d}\bX_{m} \in [v-\delta,v+\delta]$ for $m=1,\dots,M$.\\
 \item[4.] Choose a family of proposal distributions according to the guideline in Figure $\ref{flowchart of the choice of the proposals}$.\\
 \item[5.] Based on the pseudo samples extracted in step 3, determine the parameters of the proposal distribution $q$.\\
 \item[6.] For a sample size $N>0$, proposal density $q$ and the initial value $\bm X^{(0)}=\bx^{(0)}$, run Algorithm 1 to generate an $N$-path $(\bX^{(1)},\dots,\bX^{(N)})$ of a Markov chain whose stationary distribution is $f_{\bX|S=v}$.\\
 \item[7.] To check the validity of proposal distribution $q$, compute the acceptance rate, draw the autocorrelation plots, and compare the scatter plots of the MC and MH samples.\\
  \item[8.] If the proposal selection is verified in step 7, set the MH estimator of VaR contributions $\eqref{MCMC estimator}$ based on the sample path generated in step 6. Otherwise, go to step 4 and choose another proposal distribution.\\
\end{enumerate}

\begin{figure}[h]
\begin{center}
\begin{tikzpicture}[
    node distance=2cm,
    startstop/.style={rectangle, draw=black, thick, fill=white,
    text width = 5em, text centered, minimum height= 2em},
    process/.style={rectangle, draw=black, thick, fill=white,
    text width = 15em, minimum height= 2em},
    ]
	\node (node0)[startstop]
		{Start};
    \node (node1) [process, below of = node0,yshift=-1cm]
    	{Are the loss random variables $(\sqc{X}{d})$ {\bf positively dependent} on each other?};
    	\draw [arrows=-Stealth] (node0) -- (node1);
    \node (node2) [process, right of=node1, xshift=+7cm]   
    	{The conditional distribution $F_{\bX|S=v}$ is likely to be unimodal and light-tailed. Thus, simple proposal distributions such as the {\bf random work} and {\bf independent proposals} can work well.};
        \draw [arrows=-Stealth] (node1) --node[midway,yshift=+0.2cm] {Yes} (node2);
 	\node (node3) [process, below of=node1,yshift=-3cm]              
 		{Are the losses all {\bf nonnegative}, that is, $\sqc{X}{d}\geq 0$\\
 		(or, are their negative parts bounded, that is, there exist \\
 		$c_j=\text{ess.inf}(X_{j}) > -\infty$ such that \\
 		$X_j-c_j \geq 0$ for $j=1,\dots,d$)?};
 	    \draw [arrows=-Stealth] (node1) --node[midway,xshift=-0.3cm] {No} (node3);
 	\node (node4)[process, right of = node3, xshift=+7cm]
 	 	{Although $F_{\bX|S=v}$ could be multimodal, it is supported on the bounded simplex. 
 	 	Therefore, {\bf independent proposal distribution defined on the simplex} could be a sensible choice. Uniform distribution on the simplex is the safest choice, but other distributions such as Dirichlet distribution are also reasonable.};
 	 	 \draw [arrows=-Stealth] (node3) --node[midway,yshift=+0.2cm] {Yes} (node4);
 	\node (node5)[process, below of=node3,yshift=-3cm]{Is the joint distribution of the loss random vector $\bX$ {\bf elliptical}?};
 	 	 \draw [arrows=-Stealth] (node3) --node[midway,xshift=-0.3cm] {No} (node5);
 	\node (node6)[process, right of = node5, xshift=+7cm]
 	 	{Although $F_{\bX|S=v}$ can be heavy-tailed, it is likely to be elliptical. Therefore, the {\bf MpCN proposal distribution} can work well.};
 	 	 \draw [arrows=-Stealth] (node5) --node[midway,yshift=+0.2cm] {Yes} (node6);
 	\node (node7)[process, below of=node5,yshift=-2cm]
 		{No guideline is available.};
 		\draw [arrows=-Stealth] (node5) --node[midway,xshift=-0.3cm] {No} (node7);
  \end{tikzpicture}
 \end{center}
 \caption{Flowchart for choosing the proposal distribution of the Metropolis-Hastings (MH) estimator of value-at-risk contributions.}\label{flowchart of the choice of the proposals}
\end{figure}
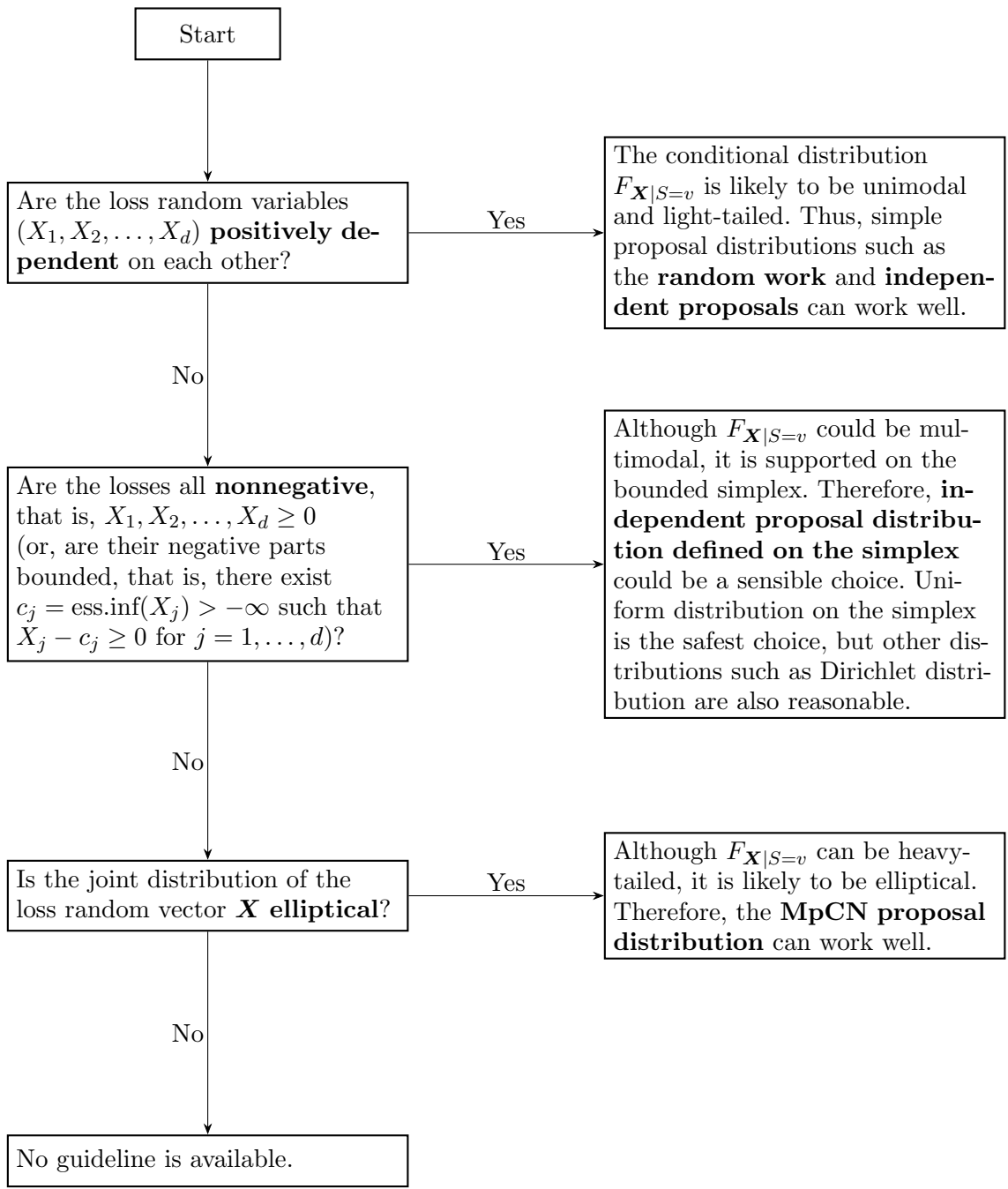

\section{Concluding remarks}\label{Concluding remarks}
Computing VaR contributions for a risk model specified by joint density is generally a difficult task.
To achieve this, the MH estimator of VaR contributions is proposed.
Its sample efficiency is significantly improved since the MH method generates samples directly from the conditional density given the sum constraint.
Moreover, since the MH estimator can capture the features of the risk model more directly than the existing estimators, it can maintain high performance even when the underlying loss distribution is multimodal or heavy-tailed.
By the general theory of Markov chains, the MH estimator is consistent and asymptotically normally distributed.
Through simulation and empirical studies based on real-world data, the performance of the MH estimator was compared with those of other existing estimators for various risk models.
The numerical results demonstrated that in most risk models, the MH estimator had smaller bias and RMSE compared with other existing estimators even when the dimension of the portfolio was high, such as $d\approx 500$.

Potential future research includes a theoretical study of the conditional joint distribution of $\bX|S=v$.
Our main interest is in the influence of the underlying copula of a risk model on the tail behavior and multimodality of the density $f_{\bX|S=v}$.
We believe that revealing relationships among them can provide more promising guidelines for the  proposal selection of the MH estimator.

\section*{Acknowledgements}
We wish to thank to Paul Embrechts from ETH Z\"urich for his valuable comments regarding the simulation setup.
We would also like to express our gratitude to Kengo Kamatani from Osaka University, and Marius Hofert from the University of Waterloo for fruitful discussions on MCMC and Archimedean copulas.

\section*{Funding}
This work was supported by the Japan Society for the Promotion of Science (JSPS) under the Core-to-Core program at Keio University.



\bibliographystyle{rQUF}
\bibliography{VaRcont}

\begin{thebibliography}{42}
\providecommand{\natexlab}[1]{#1}
\providecommand{\noopsort}[1]{}
\providecommand{\printfirst}[2]{#1}
\providecommand{\singleletter}[1]{#1}
\providecommand{\switchargs}[2]{#2#1}

\bibitem[\protect\citeauthoryear{Demarta and McNeil}{2005}]{demarta2005t}
Demarta, S. and McNeil, A.J., The t copula and related copulas. {\itshape
  International statistical review}, 2005, \textbf{73}, 111--129.

\bibitem[\protect\citeauthoryear{Denault}{2001}]{denault2001coherent}
Denault, M., Coherent allocation of risk capital. {\itshape Journal of Risk},
  2001, \textbf{4}, 1--34.

\bibitem[\protect\citeauthoryear{Dev}{2004}]{dev2004economic}
Dev, A., {\itshape Economic capital: a practitioner guide}, 2004  (Risk Books:
  New York).

\bibitem[\protect\citeauthoryear{Fan
  {\itshape{et~al.}}}{2012}]{fan2012decomposition}
Fan, G., Zeng, Y. and Wong, W.K., Decomposition of portfolio VaR and expected
  shortfall based on multivariate Copula simulation. {\itshape International
  Journal of Management Science and Engineering Management}, 2012, \textbf{7},
  153--160.

\bibitem[\protect\citeauthoryear{Feller}{2008}]{feller2008introduction}
Feller, W., {\itshape An introduction to probability theory and its
  applications},  Vol. 2, , 2008, John Wiley \& Sons.

\bibitem[\protect\citeauthoryear{Fern{\'a}ndez and
  Steel}{1998}]{fernandez1998bayesian}
Fern{\'a}ndez, C. and Steel, M.F., On Bayesian modeling of fat tails and
  skewness. {\itshape Journal of the American Statistical Association}, 1998,
  \textbf{93}, 359--371.

\bibitem[\protect\citeauthoryear{Geyer}{2011}]{geyer2011introduction}
Geyer, C., Introduction to markov chain monte carlo. In {\itshape Handbook of
  Markov Chain Monte Carlo}, pp. 3--47, 2011  (Springer: New York).

\bibitem[\protect\citeauthoryear{Glasserman}{2005}]{glasserman2005measuring}
Glasserman, P., Measuring marginal risk contributions in credit portfolios.
  {\itshape Journal of Computational Finance}, 2005, \textbf{9}, 1--41.

\bibitem[\protect\citeauthoryear{Glasserman}{2013}]{glasserman2013monte}
Glasserman, P., {\itshape Monte Carlo methods in financial engineering}, 2013
  (Springer: New York).

\bibitem[\protect\citeauthoryear{Hallerbach}{2003}]{hallerbach2003decomposing}
Hallerbach, W.G., Decomposing portfolio value-at-risk: A general analysis.
  {\itshape Journal of Risk}, 2003, \textbf{5}, 1--18.

\bibitem[\protect\citeauthoryear{Hansen}{2009}]{hansen2009nonparametric}
Hansen, B., Nonparametric Regression  [online]. , 2009. Available online at:
  http://www.ssc.wisc.edu/~bhansen/718/NonParametrics2.pdf (accessed 10 July
  2017).

\bibitem[\protect\citeauthoryear{Hastings}{1970}]{hastings1970monte}
Hastings, W.K., Monte Carlo sampling methods using Markov chains and their
  applications. {\itshape Biometrika}, 1970, \textbf{57}, 97--109.

\bibitem[\protect\citeauthoryear{Hofert
  {\itshape{et~al.}}}{2012}]{hofert2012likelihood}
Hofert, M., M{\"a}chler, M. and Mcneil, A.J., Likelihood inference for
  Archimedean copulas in high dimensions under known margins. {\itshape Journal
  of Multivariate Analysis}, 2012, \textbf{110}, 133--150.

\bibitem[\protect\citeauthoryear{Huang
  {\itshape{et~al.}}}{2009}]{huang2009estimating}
Huang, J.J., Lee, K.J., Liang, H. and Lin, W.F., Estimating value at risk of
  portfolio by conditional copula-GARCH method. {\itshape Insurance:
  Mathematics and economics}, 2009, \textbf{45}, 315--324.

\bibitem[\protect\citeauthoryear{Joe}{2014}]{joe2014dependence}
Joe, H., {\itshape Dependence modeling with copulas}, 2014  (CRC Press:
  Florida).

\bibitem[\protect\citeauthoryear{Jondeau and
  Rockinger}{2006}]{jondeau2006copula}
Jondeau, E. and Rockinger, M., The copula-garch model of conditional
  dependencies: An international stock market application. {\itshape Journal of
  international money and finance}, 2006, \textbf{25}, 827--853.

\bibitem[\protect\citeauthoryear{Jones
  {\itshape{et~al.}}}{2006}]{jones2006fixed}
Jones, G.L., Haran, M., Caffo, B.S. and Neath, R., Fixed-width output analysis
  for Markov chain Monte Carlo. {\itshape Journal of the American Statistical
  Association}, 2006, \textbf{101}, 1537--1547.

\bibitem[\protect\citeauthoryear{Kamatani}{2017}]{kamatani2017ergodicity}
Kamatani, K., Ergodicity of Markov chain Monte Carlo with reversible proposal.
  {\itshape Journal of Applied Probability}, 2017, \textbf{54}, 638--654.

\bibitem[\protect\citeauthoryear{Kamatani}{2014}]{kamatani2014efficient}
Kamatani, K., Efficient strategy for the Markov chain Monte Carlo in
  high-dimension with heavy-tailed target probability distribution. {\itshape
  arXiv preprint arXiv:1412.6231}, 2014.

\bibitem[\protect\citeauthoryear{McNeil
  {\itshape{et~al.}}}{2015}]{mcneil2015quantitative}
McNeil, A.J., Frey, R. and Embrechts, P., {\itshape Quantitative risk
  management: Concepts, techniques and tools}, 2015  (Princeton University
  Press: Princeton).

\bibitem[\protect\citeauthoryear{McNeil
  {\itshape{et~al.}}}{2009}]{mcneil2009multivariate}
McNeil, A.J., Ne{\v{s}}lehov{\'a}, J. {\itshape et~al.}, Multivariate
  Archimedean copulas, d-monotone functions and ℓ1-norm symmetric
  distributions. {\itshape The Annals of Statistics}, 2009, \textbf{37},
  3059--3097.

\bibitem[\protect\citeauthoryear{Mengersen and
  Tweedie}{1996}]{mengersen1996rates}
Mengersen, K.L. and Tweedie, R.L., Rates of convergence of the Hastings and
  Metropolis algorithms. {\itshape The annals of Statistics}, 1996,
  \textbf{24}, 101--121.

\bibitem[\protect\citeauthoryear{Metropolis
  {\itshape{et~al.}}}{1953}]{metropolis1953equation}
Metropolis, N., Rosenbluth, A.W., Rosenbluth, M.N., Teller, A.H. and Teller,
  E., Equation of state calculations by fast computing machines. {\itshape The
  journal of chemical physics}, 1953, \textbf{21}, 1087--1092.

\bibitem[\protect\citeauthoryear{Mikosch}{1999}]{mikosch1999regular}
Mikosch, T., Regular variation, subexponentiality and their applications in
  probability theory. Technical report 99-013, University of Groningen, 1999.

\bibitem[\protect\citeauthoryear{Nelsen}{2006}]{nelsen2006introduction}
Nelsen, R.B., {\itshape An introduction to copulas}, 2006  (Springer: New
  York).

\bibitem[\protect\citeauthoryear{Nummelin}{2002}]{nummelin2002mc}
Nummelin, E., MC's for MCMC'ists. {\itshape International Statistical Review},
  2002, \textbf{70}, 215--240.

\bibitem[\protect\citeauthoryear{Nummelin}{2004}]{nummelin2004general}
Nummelin, E., {\itshape General irreducible Markov chains and non-negative
  operators}, 2004  (Cambridge University Press: Cambridge).

\bibitem[\protect\citeauthoryear{Pagan and
  Ullah}{1999}]{pagan1999nonparametric}
Pagan, A. and Ullah, A., {\itshape Nonparametric econometrics}, 1999
  (Cambridge University Press: Cambridge).

\bibitem[\protect\citeauthoryear{Resnick}{2013}]{resnick2013extreme}
Resnick, S.I., {\itshape Extreme values, regular variation and point
  processes}, 2013  (Springer: New York).

\bibitem[\protect\citeauthoryear{Roberts and
  Rosenthal}{2004}]{roberts2004general}
Roberts, G.O. and Rosenthal, J.S., General state space Markov chains and MCMC
  algorithms. {\itshape Probability Surveys}, 2004, \textbf{1}, 20--71.

\bibitem[\protect\citeauthoryear{Rosenthal}{1995}]{rosenthal1995minorization}
Rosenthal, J.S., Minorization conditions and convergence rates for Markov chain
  Monte Carlo. {\itshape Journal of the American Statistical Association},
  1995, \textbf{90}, 558--566.

\bibitem[\protect\citeauthoryear{Schmidt}{2002}]{schmidt2002tail}
Schmidt, R., Tail dependence for elliptically contoured distributions.
  {\itshape Mathematical Methods of Operations Research}, 2002, \textbf{55},
  301--327.

\bibitem[\protect\citeauthoryear{Tasche}{1995}]{tasche1999risk}
Tasche, D., Risk contributions and performance measurement.  Working Paper,
  Techische Universit\"at M{\"u}nchen, 1995.

\bibitem[\protect\citeauthoryear{Tasche}{2001}]{tasche2001conditional}
Tasche, D., Conditional expectation as quantile derivative. {\itshape arXiv
  preprint math/0104190}, 2001.

\bibitem[\protect\citeauthoryear{Tasche}{2004}]{tasche2004capital}
Tasche, D., Capital allocation with CreditRisk+. In {\itshape CreditRisk+ in
  the banking industry}, edited by M.~Gundlach and F.~Lehrbass, pp. 25--44,
  2004  (Springer: New York).

\bibitem[\protect\citeauthoryear{Tasche}{2008}]{tasche2008capital}
Tasche, D., Capital allocation to business units and sub-portfolios: the Euler
  principle. In {\itshape Pillar II in the New Basel Accord: The Challenge of
  Economic Capital}, pp. 423--453, 2008.

\bibitem[\protect\citeauthoryear{Tasche}{2009}]{tasche2009capital}
Tasche, D., Capital allocation for credit portfolios with kernel estimators.
  {\itshape Quantitative Finance}, 2009, \textbf{9}, 581--595.

\bibitem[\protect\citeauthoryear{Tasche and
  Tibiletti}{2004}]{tasche2004approximations}
Tasche, D. and Tibiletti, L., Approximations for the Value-at-Risk approach to
  risk-return analysis. {\itshape The ICFAI Journal of Financial Risk
  Management}, 2004, \textbf{1}, 44--61.

\bibitem[\protect\citeauthoryear{Tierney}{1994}]{tierney1994markov}
Tierney, L., Markov chains for exploring posterior distributions. {\itshape the
  Annals of Statistics}, 1994, pp. 1701--1728.

\bibitem[\protect\citeauthoryear{Vats
  {\itshape{et~al.}}}{2015}]{vats2015multivariate}
Vats, D., Flegal, J.M. and Jones, G.L., Multivariate output analysis for Markov
  chain Monte Carlo. {\itshape arXiv preprint arXiv:1512.07713}, 2015.

\bibitem[\protect\citeauthoryear{Yamai and
  Yoshiba}{2002}]{yamai2002comparative}
Yamai, Y. and Yoshiba, T., Comparative analyses of expected shortfall and
  value-at-risk: Their estimation error, decomposition, and optimization.
  {\itshape Monetary and economic studies}, 2002, \textbf{20}, 87--121.

\bibitem[\protect\citeauthoryear{Yoshiba}{2013}]{yoshiba2013risk}
Yoshiba, T., Risk aggregation by a copula with a stressed condition.  Working
  Paper, Bank of Japan, 2013.

\end{thebibliography}

\appendices

\section{Consistency and asymptotic normality}\label{sec: consistency and asymptotic normality}

In this appendix, we derive conditions on a copula and marginal distributions with which the corresponding MH estimator of VaR contributions satisfies consistency~\eqref{MCMC estimator consistency} and CLT $(\ref{central limit theorem})$ for some choice of proposal distribution $q$.
This study reveals which proposal distribution is appropriate for a given risk model.

We classify the loss distribution $F_{\bX}$ into two cases; one wherein $\text{supp}(f_{\bX})=\R^{d}_{+}=\{\bx \in \R^{d}:\bx \geq \bzero\}$ and another wherein $\text{supp}(f_{\bX})=\R^{d}$. 
The former corresponds to the case wherein we model {\it pure losses}, and the latter to the case of \emph{profits and losses (P}\&\emph{L)}.
Our result is mainly about the former case, and we provide some limited examples for the latter case.
It should be emphasized that the former case of pure losses includes a broad range of loss models.
To demonstrate this, let $c_{j}=\text{ess.inf}(X_j)$, and set $\tilde X_{j}=X_j - c_j$, $j=1,\dots,d$. 
If $-\infty<c_j$, then $\tilde X_j \geq 0$.
For $\tilde S = \sum_{j=1}^{d}\tilde X_j$, the translation invariance of VaR$_{p}$ implies that
\begin{align*}
\text{VaR}_{p}(\tilde S) = \text{VaR}_{p}(S) - \sum_{j=1}^{d}c_{j}. 
\end{align*}
Therefore, the allocated capital of $\tilde X_j$ is given by
\begin{align*}
\tilde{\text{AC}}_{j} &= \E[\tilde X_j\ | \ \tilde S = \text{VaR}_{p}(\tilde S)]
= \E\left[X_j - c_j|S-\sum_{j=1}^{d}c_j = \text{VaR}_{p}(S)-\sum_{j=1}^{d}c_j\right]\\
&= \E[X_j|S = \text{VaR}_{p}(S)]- c_j = \text{AC}_j - c_j.
\end{align*}
Consequently, one can estimate $(\sqc{\text{AC}}{d})$ by first estimating $(\sqc{\tilde{\text{AC}}}{d})$ based on the joint distribution of $(\sqc{\tilde X}{d})$ such that $\text{supp}(f_{\tilde \bX})=\R^{d}_{+}$, and then subtracting $(\sqc{c}{d})$ from $\tilde{\text{AC}}$.
Therefore, our result for the former case includes the case of P\&L where the minimums of the profits are bounded.

\subsection{Case of pure losses}

When $\text{supp}(f_{\bX})=\R^{d}_{+}$, the conditional distribution $F_{\bX'|S=v}$ is supported on the following bounded set called the $v$-{\it simplex}:
\begin{equation}\label{v simplex}
{\mathcal S}_{v}:=\{\bx \in \R^{d'}: \bx \geq \bzero, 0\leq x_{1}+\cdots+x_{d'}\leq v\}.
\end{equation}
Thanks to the compactness of the support, we can state simple conditions on the marginal loss densities and copula density, which leads to consistency and CLT of the MH estimator.
\begin{theorem}\label{main theorem}
Suppose that the joint distribution $f_{\bX}$ is supported on $\R^{d}_{+}$ and has marginal densities $\sqc{f}{d}$ and a copula density $c$.
Then, $\sqrt{N}$-CLT holds for the MH estimator $(\ref{MCMC estimator})$ of VaR contributions if the following conditions $(C1)-(C3)$ hold:
\begin{itemize}\setlength{\itemsep}{-0mm}
\item[(C1)]  $\epsilon:=\inf_{\bx,\by \in \mathcal S_{v}} q(\bx,\by)>0$,\\
\item[(C2)]  $f_{j}(x)$ is positive and bounded above for any $x \in [0,v]$ for $j=1,2,\dots,d$, and\\
\item [(C3)] $c(\bu)$ is positive and bounded above for any $\bu \in F_{1}([0,v])\times\cdots \times F_{d}([0,v])$.\\ 
\end{itemize}
\end{theorem}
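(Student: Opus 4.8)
The plan is to invoke the standard CLT for Harris-ergodic Markov chains (as cited in the excerpt via Meyn and Tweedie), which requires two ingredients: (i) the chain produced by Algorithm~1 with target $\pi = f_{\bX'|S=v}$ is $\pi$-irreducible, aperiodic, and positive Harris recurrent — indeed uniformly ergodic; and (ii) the function $h(\bx) = \bx$ satisfies the integrability condition needed for $\Sigma_h$ to be finite. The key structural fact is that when $\mathrm{supp}(F_{\bX}) = \R^d_+$, the conditional distribution $F_{\bX'|S=v}$ lives on the compact set $\mathcal S_v$ of \eqref{v simplex}, so $h(\bx)=\bx$ is trivially bounded on the support and the integrability issue evaporates; all the work goes into (i). I would establish uniform ergodicity by verifying a one-step minorization (Doeblin) condition on $\mathcal S_v$, using $(C1)$--$(C3)$ to bound the MH transition density below by a positive constant times a probability density.

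**Key steps in order.** First I would note that since $\mathcal S_v \subset [0,v]^{d'}$ is compact and, by \eqref{conditional density reformulation} together with \eqref{sklar theorem density form}, the target density on $\mathcal S_v$ is
\[
\pi(\bx') = \frac{c(F_1(x_1),\dots,F_{d'}(x_{d'}),F_d(v-\tra{\bone_{d'}}\bx'))\, f_1(x_1)\cdots f_{d'}(x_{d'})\, f_d(v-\tra{\bone_{d'}}\bx')}{f_S(v)},
\]
conditions $(C2)$ and $(C3)$ give constants $0 < m \le \pi(\bx') \le M < \infty$ for all $\bx' \in \mathcal S_v$ (the arguments $F_j(x_j)$ all lie in $F_j([0,v])$ and $v - \tra{\bone_{d'}}\bx' \in [0,v]$). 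Second, I would plug these bounds into the MH acceptance probability \eqref{acceptance probability } and the proposal lower bound $(C1)$ to show the sub-probability density $p(\bx,\by) = q(\bx,\by)\alpha(\bx,\by)$ satisfies $p(\bx,\by) \ge \epsilon \cdot (m/M) =: \beta > 0$ for all $\bx,\by \in \mathcal S_v$ — here one checks $\alpha(\bx,\by) \ge m/M$ using that $\pi(\by)/\pi(\bx) \ge m/M$ and, for the symmetric or ratio-bounded proposal, $q(\by,\bx)/q(\bx,\by)$ is bounded below (when $q$ is not symmetric one absorbs a further constant from $(C1)$ and the boundedness of $q$ on the compact set, which can be added as a standing hypothesis or read out of $(C1)$). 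Third, this yields the minorization $P(\bx, A) \ge \beta \,\lambda(A \cap \mathcal S_v)$ for a constant $\beta>0$ and $\lambda$ = Lebesgue measure, i.e. the whole state space is a small set; by the standard theory this gives uniform ergodicity with a geometric rate. Fourth, uniform ergodicity plus boundedness of $h$ on $\mathcal S_v$ implies $\mathbb E_\pi\|h\|^{2+\eta} < \infty$ trivially, so the Markov chain CLT of \eqref{central limit theorem} applies with finite $\Sigma_h$; finally, since the map $\bx' \mapsto \tra{(\bx', v - \tra{\bone_{d'}}\bx')}$ is affine, the CLT transfers to $\estAC{MCMC}{q,N}$ by the continuous mapping / linear-transformation argument, completing the proof.

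**Main obstacle.** The delicate point is the lower bound on the acceptance ratio when the proposal $q$ is not symmetric: one needs $q(\by,\bx)/q(\bx,\by)$ bounded below on $\mathcal S_v \times \mathcal S_v$, which follows from $(C1)$ only if $q$ is also bounded \emph{above} there. I expect the cleanest fix is to observe that for the proposal families actually used (random walk with continuous symmetric $f$, independent proposal, or MpCN — see Section~\ref{choice of the proposal distribution}), boundedness above on the compact set $[0,v]^{d'}$ is automatic, so $(C1)$ effectively delivers two-sided control; alternatively one states the uniform upper bound as part of $(C1)$. A secondary subtlety is that the minorization must hold with respect to a measure that charges $\mathcal S_v$ — one must restrict the proposal's mass to $\mathcal S_v$ (outside $\mathcal S_v$ the target vanishes and the candidate is rejected with probability one), but since $\lambda(\mathcal S_v) > 0$ and $q(\bx,\cdot) \ge \epsilon$ on $[0,v]^{d'} \supset \mathcal S_v$, the restricted proposal still has mass $\ge \epsilon\,\lambda(\mathcal S_v) > 0$, so the Doeblin condition goes through. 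Everything else is bookkeeping with the cited Markov-chain limit theorems.
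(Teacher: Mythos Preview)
Your overall strategy---minorization on the compact simplex $\mathcal S_v$ to get uniform ergodicity, then the CLT for uniformly ergodic chains with a bounded $h$---is exactly what the paper does. The moment condition and the bounds $0<m\le\pi\le M<\infty$ from $(C2)$--$(C3)$ are also obtained the same way. The difference lies in \emph{how} the minorization is extracted from $(C1)$, and this is precisely where your self-identified ``main obstacle'' bites.

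Your route bounds $\alpha(\bx,\by)\ge m/M$ directly, which forces you to control $q(\by,\bx)/q(\bx,\by)$ from below and hence to assume (or argue) an \emph{upper} bound on $q$. That is not part of $(C1)$, so as a proof of the theorem as stated it is incomplete. The paper sidesteps this entirely by splitting on the set $Q_{\bx}:=\{\by:\ \pi(\by)q(\by,\bx)<\pi(\bx)q(\bx,\by)\}$. On $A\cap Q_{\bx}$ one has $q(\bx,\by)\alpha(\bx,\by)=\dfrac{\pi(\by)}{\pi(\bx)}\,q(\by,\bx)\ge \dfrac{\epsilon}{u}\,\pi(\by)$, using only the lower bound $q(\by,\bx)\ge\epsilon$ from $(C1)$ and the upper bound $\pi(\bx)\le u$. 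On $A\setminus Q_{\bx}$ the move is always accepted, so $q(\bx,\by)\alpha(\bx,\by)=q(\bx,\by)\ge\epsilon\ge \dfrac{\epsilon}{u}\,\pi(\by)$ since $\pi(\by)\le u$. Adding the two pieces gives $P(\bx,A)\ge (\epsilon/u)\,\pi(A)$, a one-step minorization with $\nu=\pi$ and $\delta=\epsilon/u$; no upper bound on $q$ is ever needed, and the lower bound $l$ on $\pi$ is not used either. This resolves your obstacle without adding hypotheses and makes the minorizing measure $\pi$ rather than (restricted) Lebesgue measure, which also disposes of your secondary concern about mass outside $\mathcal S_v$.
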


\begin{proof}
According to Theorem 23 in \citet{roberts2004general}, $\sqrt{N}$-CLT holds if the Markov chain is {\it uniformly ergodic} whenever $\E[||\bX'||^{2}|S=v]<\infty$.
Since $\sqc{X}{d}\geq 0$, the moment the condition is satisfied by the inequality
\begin{equation*}
\E[X_{i}X_{j}|S=v]\leq\E[(X_{1}+\cdots+X_{d})^{2}|S=v]=v^{2}<\infty
\end{equation*}
for any $i,j\in \{1,2,\dots,d\}$.
Thus, it suffices to show that the Markov chain is uniformly ergodic.
According to Theorem 1.3 in \citet{mengersen1996rates}, the Markov chain is uniformly ergodic if (and only if) the minorization condition \citep{rosenthal1995minorization} holds on the whole space ${\mathcal S}_{v}$; that is, there exists a positive integer $n$, a positive number $\delta>0$, and a probability measure $\nu$ such that 
\begin{equation}\label{minorization condition}
K^{n}(\bx,A)>\delta  \nu(A),
\end{equation}
for any $\bx \in {\mathcal S}_{v}$ and $A\in {\mathcal B}_{v}$, where ${\mathcal B}_{v}:={\mathcal B}(\R^{d'})\cap{\mathcal S}_{v}$.
Our target distribution can be written as
\begin{align*}
\pi(\bx) = \frac{f_{\bX}(\bx)}{f_{S}(v)}
= \frac{c(F_1(x_1),\dots,F_{d}(x_{d}))}{f_{S}(v)}f_1(x_1)\cdots f_{d} (x_{d}),
\end{align*}
where $(\sqc{x}{d-1})\in \mathcal S_v$ and $x_d = v-\tra{\bone}_{d}\bx$.
Thus, by conditions $(C2)$, $(C3)$, and that ${\mathcal S}_{v}\subset [0,v]^{d'}$, we have
\begin{equation}\label{bounded condition on pi}
l:=\inf_{\bx \in {\mathcal S}_{v}}\pi(\bx)>0,\hspace{6mm}
u:=\sup_{\bx \in {\mathcal S}_{v}}\pi(\bx)<\infty.
\end{equation}
Using $(\ref{bounded condition on pi})$ and condition $(C1)$, the minorization condition can be checked as follows.
For any $\bx \in {\mathcal S}_{v}$, define
\begin{equation*}
Q_{\bx}:=\left\{
\by \in {\mathcal S}_{v}: \frac{\pi(\by)}{\pi(\bx)} \frac{q(\by,\bx)}{q(\bx,\by)}<1
\right\}.
\end{equation*}
Then, for any $A \in {\mathcal B}_{v}$, we have
\begin{eqnarray*}
K(\bx,A)&=&\int_{A}\{q(\bx,\by)\alpha(\bx,\by) +r(\bx)\delta_{\bx}(\by)\}\text{d}\by\\
&\geq&\int_{Q_{\bx}}q(\bx,\by) \min \left[1, \frac{\pi(\by)}{\pi(\bx)} \frac{q(\by,\bx)}{q(\bx,\by)}\right]d\by\\
&&+\int_{A\backslash Q_{\bx}}q(\bx,\by) \min \left[1, \frac{\pi(\by)}{\pi(\bx)} \frac{q(\by,\bx)}{q(\bx,\by)}\right]d\by\\
&=&\int_{Q_{\bx}}\frac{\pi(\by)}{\pi(\bx)} q(\by,\bx)d\by
+\int_{A\backslash Q_{\bx}}q(\bx,\by)d\by\\
&\geq&\frac{\epsilon}{u}\int_{Q_{\bx}} \pi(\by)d\by+\epsilon\int_{A\backslash Q_{\bx}}\frac{\pi(\by)}{u}d\by\\
&=&\frac{\epsilon}{u} \pi(A).
\end{eqnarray*}
Therefore, the minorization condition holds for $n=1$, $\delta=\frac{\epsilon}{u}>0$, and $\nu= \pi$.
Consequently, the Markov chain is uniformly ergodic, and thus $\sqrt{N}$-CLT holds.
Since the minorization condition \eqref{minorization condition} holds, consistency of $\hat{\bm \pi}_{N}(\bm h)$ follows by Theorem 1 in \citet{nummelin2002mc}.
\end{proof}

An example of the pair of risk model and proposal distribution is given in the following example.

\begin{example}\label{example: main theorem}
For $j=1,\dots,d$, let $X_j$ follow Pareto distribution with density given by
\begin{equation}\label{pareto density}
f_j(x_j; \kappa_j,\gamma_j)=\frac{\kappa_j\gamma_j ^{\kappa_j}}{(x_j+\gamma_j)^{\kappa_j +1}},\quad \kappa_j,\gamma_j >0\quad\text{on}\quad x_j>0.
\end{equation}
Suppose $\bX=(\sqc{X}{d})$ has a survival Clayton copula with the density given by
\begin{equation}\label{survival clayton copula}
c(\bu;\theta)=\frac{\theta^{d} \Gamma(\frac{1}{\theta}+d)}{\Gamma(\frac{1}{\theta})}
 \left\{
\prod_{j=1}^{d}(1-u_{j})^{-\theta-1}
 \right\}
 \left\{
 \sum_{j=1}^{d}(1-u_{j})^{-\theta}-d+1
 \right\}^{-\frac{1}{\theta}-d},\quad 0<\theta<\infty.
\end{equation}
Some simple calculations show that the marginal distribution \eqref{pareto density} satisfies $(C2)$ and the copula \eqref{survival clayton copula} satisfies $(C3)$ under a very mild sufficient condition that $0<\theta < \log(1-p)/\log(1-\frac{1}{d})$.
Therefore, with any choice of proposal distribution $q$ satisfying $(C1)$, the corresponding MH estimator \eqref{MCMC estimator} satisfies consistency and asymptotic normality.
A possible choice of $q$ is the random walk proposal $q(\bx,\by)=f(\by-\bx)$ with $f$ the density of multivariate normal distribution with mean zero. 
Since $\by-\bx \in [-v,v]^{d'}$ for $\bx,\by \in \mathcal S_v$, $f(\by-\bx)$ is always positive.
\end{example}

It is worth noting that the condition $(C3)$ is irrelevant to the copula on the upper tail part $[F_{1}(v),1]\times\cdots \times [F_{d}(v),1]$. 
Therefore, $(C3)$ holds even if a copula density explodes at the upper corner, which is often the case with copulas having upper tail dependence.
In fact, a more general result holds for survival Archimedean copulas.
A $d$-dimensional \emph{Archimedean copula} with an Archimedean generator $\psi$ is given by
\begin{align}\label{archimedean copulas}
C_{\psi}(\bu) = \psi\left(\sum_{j=1}^{d}\psi^{-1}(u_j)\right),
\end{align}
where $\psi$ is a continuous and nonincreasing function $\psi:[0,\infty]\rightarrow [0,1]$ satisfying $\psi(0) = 1$, and $\lim_{t \rightarrow \infty}\psi(t)=0$, and is decreasing on $[0,\inf\{t:\psi(t)=0 \}]$.
The inverse $\psi^{-1}(u)$ is well-defined on $u \in (0,1]$ and $\psi^{-1}(0)$ is defined by $\psi^{-1}(0)=\inf\{t:\psi(t)=0\}$. 
Let $\psi^{(j)}$ be the $j$th derivative of $\psi$. 
An Archimedean generator $\psi$ defines a proper $d$-copula via \eqref{archimedean copulas} for any $d\geq 1$ if and only if $\psi$ is \emph{completely monotone}, that is, $(-1)^{j}\psi^{(j)}\geq 0$ on $(0,\infty)$ for all $j = 0,1,\dots$; see \cite{mcneil2009multivariate}.
We denote the class of completely monotone generators as $\Psi_{\infty}$.
According to Bernstein's Theorem \citep[see, for example,][]{feller2008introduction}, $\psi \in \Psi_{\infty}$ admits the Laplace-Stieltjes representation $\psi(t)=\E_{F}[{\mathrm e}^{-tV}]$ for some positive random variable $V>0$.

\begin{theorem}[Sufficient condition of $(C3)$ for survival Archimedean copulas]\label{theorem of survival archimedean copulas}
Let $\psi \in \Psi_{\infty}$ be a completely monotone Archimedean generator. 
If $\E[V^d] < \infty$ where $V$ is such that $\psi(t)=\E[{\mathrm e}^{-tV}]$, then the survival Archimedean copula $\bar C_{\psi}$ has a density satisfying the condition $(C3)$ in Theorem~\ref{main theorem}; moreover, $\bar C_{\psi}$ has a zero lower tail dependence coefficient.
\end{theorem}
\begin{proof}

Denote $\bar u_j=F_j(v)<1$ and $\underbar{$u$}_j= 1-\bar u_j>0$. 
The density of the survival Archimedean copula is given by
\begin{align}\label{Archimedean density}
\nonumber \bar c_{\psi}(\bu) &= c_{\psi}(1-\bu)
= \psi^{(d)}\left(\sum_{j=1}^{d}\psi^{-1}(1-u_j)\right)\prod_{j=1}^{d}\frac{1}{\psi^{(1)}(\psi^{-1}(1-u_j))}\\
&= (-1)^d\psi^{(d)}\left(t\right)\prod_{j=1}^{d}\frac{1}{(-1)\psi^{(1)}(t_j)},
\end{align}
where $t_j = \psi^{-1}(1-u_j)$ and $t=\sum_{j=1}^{d}t_j$.
When $u_j \in [0,F_j(v)]$, we have $0<\underbar{$u$}_j \leq 1-u_j \leq 1$ and thus $t_j = \psi^{-1}(1-u_j) \in [0,\bar t_j]$ where $\bar t_j = \psi^{-1}(\underbar{$u$}_j) < \infty$.
Thus, $0 \leq t=\sum_{j=1}^{d}l_j < \infty$.

Since $\psi \in \Psi_{\infty}$, it is of the form $\psi(t)=\E[{\mathrm e}^{-tV}]$ for some positive random variable $V>0$.
Therefore, on $0 \leq t <\infty$, we have $0< (-1)^j \psi^{(j)}(t)<\infty$ for $j=1$ and $j=d$ since $(-1)^j \psi^{(j)}(t)=\E[V^j {\mathrm e}^{-tV}]>0$ and  $\E[V^j {\mathrm e}^{-tV}] \leq \E[V^j] < \infty$ for $j=1$ and $j=d$ by assumption.
Consequently, the density~\eqref{Archimedean density} is bounded from below and above.

When $\E[V^d]<\infty$, the corresponding Archimedean copula has an upper tail dependence coefficient
\begin{align*}
\lambda_{u}(C_{\psi})=2-2\lim_{t\rightarrow 0}\frac{1-\psi(2t)}{1-\psi(t)}
=2-2\lim_{t\rightarrow 0}\frac{\psi^{(1)}(2t)}{\psi^{(1)}(t)}
\end{align*}
where the last equality comes from l'H$\hat{\text{o}}$pital's rule. 
Since 
\begin{align*}
\lim_{t\rightarrow 0}\frac{\psi^{(1)}(2t)}{\psi^{(1)}(t)}=
\lim_{t\rightarrow 0}\frac{(-1)\psi^{(1)}(2t)}{(-1)\psi^{(1)}(t)}\\
=\lim_{t\rightarrow 0}\frac{\E[V\mathrm{e}^{-2tV}]}{\E[V\mathrm{e}^{-tV}]}=1
\end{align*}
since $\E[V\mathrm{e}^{-2tV}]$ and $\E[V\mathrm{e}^{-tV}]$ go to $\E[V]<\infty$ as $t \rightarrow 0$.
Thus, for the survival Archimedean copula, $\lambda_{l}(\bar C_{\psi})=\lambda_{u}(C_{\psi})=0$.
\end{proof}

According to the Theorem~\ref{theorem of survival archimedean copulas}, the survival Clayton copula satisfies $(C3)$ while the survival Gumbel copula does not because it has a positive lower tail dependence coefficient; see \citet{hofert2012likelihood}.

\begin{remark}[Consistency and CLT for copulas with lower tail dependence]
Condition $(C3)$ does not hold for elliptical copulas with lower tail dependence, such as a student's $t$-copula with density
\begin{equation}\label{student t copula density}
c^{t}(\bu;\nu,{\bf P})=\frac{\Gamma(\frac{\nu+d}{2})\Gamma(\frac{\nu}{2})}{|{\bf P}|^{\frac{1}{2}}\Gamma(\frac{\nu+1}{2})^{d}}
\frac{
  \left(
    1+\frac{\tra{\bx}{\bf P}^{-1}\bx}{\nu}
  \right)^{-\frac{\nu+d}{2}}
}{
  \Pi_{j=1}^{d}(1+\frac{x^{2}_{j}}{\nu})^{-\frac{\nu+1}{2}}
},\quad\nu>0.
\end{equation}
By carefully checking the proof of Theorem \ref{main theorem},  the consistency and asymptotic normality of $\hat{{\bm \pi}}_{N}(\bm h)$ still hold under a weaker condition than $(C2)$ and $(C3)$;
\begin{align}\label{sufficient condition for CLT on q and pi}
\frac{q(\by,\bx)}{\pi(\bx)} = 
 \frac{q(\by,\bx)f_{S}(v)}{c(F_1(x_1),\dots,F_{d}(x_d))f_1(x_1)\cdots f_{d}(x_d)} \geq L,\quad \bx,\by \in \tilde{\mathcal S}_v,
\end{align}
for some positive constant $L>0$, where $\tilde{\mathcal S}_v = \{\bx \in \R^{d}:\tra{\bone}_{d}\bx = 1\}$.
While it is not straightforward to determine, one sufficient condition of \eqref{sufficient condition for CLT on q and pi} under $(C1)$ is that $\pi$ is bounded above on $\tilde{\mathcal S_v}$.
Another condition is that the proposal density $q$ explodes faster than $\pi$.
An example of such $q$ can be an independent proposal distribution $q(\bx,\by)=f(\by)$ with $f$ the density of the Dirichlet distribution $\mathcal D(\sqc{\alpha}{d})$ for $\sqc{\alpha}{d}<1$, which explodes to $\infty$ as $\bx$ approaches to an axis.
Therefore, by choosing such proposal distributions, consistency and CLT can still hold even if a copula density explodes at the lower corner $\bu = \bzero$.
\end{remark}

\subsection{Case of profits and losses}

In contrast to the case of pure losses, showing the consistency and CLT of the MH estimator is challenging for the case wherein we model P\&L.
Since the conditional density $f_{\bX'|S=v}$ is supported on the unbounded space $\R^{d'}$, careful study of its tail behaviors is necessary.
When the original loss random variable $\bX$ follows an elliptical distribution, the results of \citet{kamatani2017ergodicity} can be applicable to justify the CLT of our MH estimator with the MpCN proposal distribution.
An example of justification of CLT for the case wherein $\bX$ follows the multivariate student's $t$-distribution is provided below.

\begin{example}[Justification of CLT for multivariate student's $t$-Distribution]
We demonstrate that the MpCN proposal distribution $(\ref{MpCN proposal})$ achieves the CLT of VaR contributions when the underlying loss model is a multivariate student's $t$-distribution $t_{\nu}(\bmu,{\bf \Sigma})$ with density
\begin{equation}\label{multivariate student t density}
f_{\bX}(\bx;\nu,\Sigma)=\frac{\Gamma(\frac{\nu+d}{2})}{|\pi d{\bf \Sigma}|^{\frac{1}{2}}\Gamma(\frac{\nu}{2})} 
\left(
1+\frac{
  \tra{(\bx-\bmu)}{\bf \Sigma}^{-1} (\bx-\bmu)
}{\nu}
\right)^{-\frac{\nu+d}{2}}.
\end{equation}

Let $\bX \sim t_{\nu}(\bmu,{\bf \Sigma})$ where $\nu>2$, $\bmu \in \R^{d}$, and ${\bf \Sigma} \in {\mathcal M}^{d\times d}_{+}$.
Throughout the discussion, we set $\bmu=\bzero$ for simplicity.
Write  
\begin{equation*}
{\bf \Sigma}^{-1} = \left(
    \begin{array}{cc}
      {\bf A}_{1} & {\bm a}_{2} \\
      \tra{{\bm a}_{2}} & a_{3}  
    \end{array}
  \right)=:{\bf A}
\end{equation*}
for ${\bf A}_{1}\in {\mathcal M}^{d'\times d'}(\R)$, ${\bm a}_{2} \in \R^{d'}$, and $a_{3}\in \R$.
Then, it holds that
\begin{equation*}\label{conditional quadratic form}
\left(
    \tra{\begin{array}{c}
      \bx  \\
      v-\tra{\bone_{d'}}\bx  
    \end{array}
  \right)}
  \left(
    \begin{array}{cc}
      {\bf A}_{1} & {\bm a}_{2} \\
      \tra{{\bm a}_{2}} & a_{3}  
    \end{array}
  \right)
 \left(
    \begin{array}{c}
      \bx  \\
      v-\tra{\bone_{d'}}\bx  
    \end{array}
  \right)
  =
  \tra{(\bx-\bw)}{\bf V}(\bx-\bw)+\eta,
\end{equation*}
where ${\bf V}:={\bf A}_{1}-{\bm a}_{2}\tra{\bone_{d'}}-\bone_{d'}\tra{{\bm a}_{2}}+\bone_{d'}\tra{\bone_{d'}} \in {\mathcal M}^{d'\times d'}_{+}$, $\bw:={\bf V}^{-1}(vA_{3}\bone_{d'}-v{\bm a}_{2}) \in \R^{d'}$, and $\eta:=v^{2}a_{3}-\tra{\bw}{\bf V}\bw \in \R$.
Using this identity, we have that
\begin{eqnarray}\label{pearson type vii}
f_{\bX'|S=v}(\bx)&\propto&f_{\bX}(\bx,v-\tra{\bone_{d'}}\bx) \nonumber \\
&\propto&
\left(
1+\frac{
  \tra{(\bx-\bw)}{\bf W}^{-1} (\bx-\bw)+\eta
}{\nu}
\right)^{-\frac{\nu+d}{2}} \nonumber \\
&\propto&
\left(
1+\frac{
  \tra{(\bx-\bw)}{\bf W}^{-1} (\bx-\bw)
}{\nu+\eta}
\right)^{-\frac{\nu+d}{2}},
\end{eqnarray}
where $\bf W = \bf V ^{-1}$.
Provided $\nu+\eta>0$, $\bX'|S=v$ follows a $d'$-dimensional elliptical distribution with the location parameter $\bw$, scale parameter ${\bf W}$, and the density generator $g: \R_{+} \rightarrow \R_{+}$ given by
\begin{equation*}\label{density generator}
g(x)=\left(
1+\frac{
  x
}{\nu+\eta}
\right)^{-\frac{\nu+d}{2}}.
\end{equation*}
This type of distribution is called a {\it Pearson type $V\hspace{-1mm}I\hspace{-1mm}I$ distribution} \citep{schmidt2002tail}.

Consider the MH estimator $(\ref{MCMC estimator})$ where target distribution $\pi$ is $f_{\bX'|S=v}$, and proposal distribution $q$ is MpCN $(\ref{MpCN proposal})$.
According to Theorem 25 in \citet{roberts2004general}, $\sqrt{N}$-CLT holds if the Markov chain is geometrically ergodic and $\E[||\bX'||^{2}|S=v]<\infty$.
According to Proposition 3.4 in Kamatani (2016), the Markov chain with the MpCN proposal distribution is geometrically ergodic if $\E[||\bX'||^{\delta}|S=v]<\infty$ for some $\delta>0$, $\pi(\bx)$ is strictly positive and continuous, and it is symmetrically regularly varying, that is,
\begin{equation}\label{symmetrically regularly varying}
\lim_{r \rightarrow \infty} \frac{\pi(r\bx)}{\pi(r\bone_{d'})}=\lambda(\bx),
\end{equation}
for some function $\lambda:\R^{d'}\rightarrow (0,\infty)$ such that $\lambda(\bx)=1$ for any $\bx \in S^{d'-1}_{{\bf W}}$, where $S^{d'-1}_{{\bf W}}:=\{\bx \in \R^{d'}: ||{\bf W}^{-\frac{1}{2}}\bx||=||{\bf W}^{-\frac{1}{2}}\bone_{d'}||\}$.
We will see that the moment condition holds, and the condition on tail $(\ref{symmetrically regularly varying})$ is also satisfied for $\pi=f_{\bX'|S=v}$.

Write $R:=||\bX'||$.
It can be shown that $g$ is regularly varying \citep[see, for example,][]{resnick2013extreme} at $\infty$ with index $\alpha=-\frac{\nu+d}{2}$; that is,
\begin{equation}\label{density generator regularly varying}
\lim_{r \rightarrow \infty}\frac{g(rx)}{g(r)}=x^{-\frac{\nu+d}{2}}, \quad x>0.
\end{equation}
According to Proposition 3.7 in \citet{schmidt2002tail}, $f_{R|S=v}$ is regularly varying with index $-(\nu+1)$.
Then, according to Karamata's Theorem \citep[we referred to][]{resnick2013extreme}, $F_{R|S=v}$ is regularly varying with index $-\nu$.
Therefore, $\E[R^{\delta}|S=v]<\infty$ holds for any $\delta<\nu$; see \citet
[][]{mikosch1999regular}.
Thus, all the moment conditions above are satisfied as long as $\nu>2$.
In the elliptical case, tail condition $(\ref{symmetrically regularly varying})$ is a direct consequence of $(\ref{density generator regularly varying})$.
Since $\tra{(\bx-\bw)}{\bf W}^{-1} (\bx-\bw)>0$ for all $\bx \in \R^{d'}$, it holds that
\begin{equation*}
\lim_{r \rightarrow \infty} \frac{f_{\bX'|S=v}(r\bx)}{f_{\bX'|S=v}(r\bone_{d})}=\left(
\frac{
||{\bf W}^{-\frac{1}{2}}\bx||
}{
||{\bf W}^{-\frac{1}{2}}\bone_{d'}||
}
\right)^{-(\nu+d)},\quad \bx \in \R^{d'}.
\end{equation*}
Thus, by taking 
\begin{equation*}
\lambda(\bx):=\left(
\frac{
||{\bf W}^{-\frac{1}{2}}\bx||
}{
||{\bf W}^{-\frac{1}{2}}\bone_{d'}||
}
\right)^{-(\nu+d)},
\end{equation*}
in $(\ref{symmetrically regularly varying})$, $\pi=f_{\bX'|S=v}$ is shown to be symmetrically regularly varying.
Putting them together, we conclude that the MH estimator with the MpCN proposal distribution satisfies $\sqrt{N}$-CLT when the underlying loss vector follows a multivariate student's $t$-distribution with $\nu>2$ and $\eta>-\nu$.
Note that in the numerical experiment in section $\ref{Simulation and empirical studies}$, we set $d=3$ and $\nu=4$. 
Since $\eta+\nu=137.935>0$, CLT holds true.

\end{example}

\end{document}